\setlist{nosep}
\newcommand{\R}{\mathbb{R}}
\newcommand{\C}{\mathcal{C}}
\newcommand{\mU}{\mathcal{U}}
\newcommand{\mL}{\mathcal{L}}
\newcommand{\PP}{\mathcal{P}^2}
\newcommand{\PPd}{\mathcal{P}^2(\R^d)}
\newcommand{\gamopt}{\gamma^{\operatorname{opt}}}
\newcommand{\Pa}{\mathcal{P}}
\newcommand\norm[1]{\left\lVert#1\right\rVert}
\newcommand\bignorm[1]{\big\lVert#1\big\rVert}
\newcommand{\supp}{\operatorname{supp}}
\newcommand{\bx}{\mathbf{x}}
\newcommand{\by}{\mathbf{y}}
\newtheorem{conjecture}{Conjecture}
\newcommand{\until}[1]{\{1,\dots, #1\}}
\newtheorem{theorem}{Theorem}[section]
\newtheorem{corollary}[theorem]{Corollary}
\newtheorem{proposition}[theorem]{Proposition}
\newtheorem{remark}[theorem]{Remark} 
\newtheorem{definition}{Definition}[section]
\newcommand{\vect}[1]{\mathbbold{#1}}
\newcommand{\diag}{\operatorname{diag}}
\newcommand{\setdef}[2]{\{#1 \; | \; #2\}}
\newcommand\oprocendsymbol{\hbox{$\square$}}
\newcommand\oprocend{\relax\ifmmode\else\unskip\hfill\fi\oprocendsymbol}
\DeclareSymbolFont{bbold}{U}{bbold}{m}{n}
\DeclareSymbolFontAlphabet{\mathbbold}{bbold}
\newcounter{saveenum}
\newcommand\blfootnote[1]{%
  \begingroup
  \renewcommand\thefootnote{}\footnote{#1}%
  \addtocounter{footnote}{-1}%
  \endgroup
}
\title{Distributed Wasserstein Barycenters \\ via Displacement Interpolation}
\author{Pedro Cisneros-Velarde \and Francesco Bullo}
\date{}
\begin{document}
\maketitle
\begin{abstract}
  Consider a multi-agent system whereby each agent has an initial
  probability measure.
  In this paper, we propose a distributed algorithm based upon stochastic,
  asynchronous and pairwise exchange of information and displacement
  interpolation in the Wasserstein space.
  We characterize the evolution of this algorithm and prove it computes the
  Wasserstein barycenter of the initial measures under various conditions.
  One version of the algorithm computes a standard Wasserstein barycenter,
  i.e., a barycenter based upon equal weights; and the other version
  computes a randomized Wasserstein barycenter, i.e., a barycenter based
  upon random weights for the initial measures.
  Finally, we specialize our algorithm to Gaussian distributions and draw a
  connection with opinion dynamics.
  %the modeling of opinion dynamics in mathematical sociology.
\end{abstract}

\blfootnote{
  This work is supported by the U. S. Army Research Laboratory and the
  U. S. Army Research Office under grant number W911NF-15-1-0577. The
    views and conclusions contained in this document are those of the
    authors and should not be interpreted as representing the official
    policies, either expressed or implied, of the Army Research Laboratory
    or the U.S. Government.}
\blfootnote{Pedro Cisneros-Velarde (e-mail:
    pacisne@gmail.com) and Francesco Bullo (e-mail:
    bullo@ucsb.edu) are with the University of California, Santa Barbara.}

\section{Introduction}

\paragraph*{Problem statement and motivation} 
There has been strong interest in the theoretical study and practical
application of Wasserstein barycenters over the last decade.  In this
paper, we characterize the evolution of a distributed system where all the
computing units or \emph{agents} hold a probability measure, interact
through pairwise communication by performing \emph{displacement
  interpolations} in the Wasserstein space. These pairwise interactions are
asynchronous 
%\footnote{A distributed system is \emph{synchronous} when all
%  computing units perform their computations altogether in every time step
%  (assuming the system computes in a discrete sequence of time steps);
%  otherwise, it is \emph{asynchronous}.}
and stochastic.  We study the
conditions under which the agents' measures will asymptotically achieve
consensus and, additionally, consensus on a Wasserstein barycenter of the
agents' initial measures. %Moreover, 
We are interested in computing both the
standard Wasserstein barycenter and randomized weighted versions of it --
as a result of the stochastic interactions.  %Finally, we 
We consider both
undirected and directed communication graphs.  To the best of our
knowledge, these problems have not been studied in the literature on the
distributed computation of Wasserstein barycenters.

Asynchronous pairwise algorithms are inherently robust to communication
failures and do not require synchronization of the whole multi-agent
system. Pairwise interactions 
may potentially reduce 
%are also important because they have the
%potential of reducing 
the local computation complexity of each agent.
%, which
%otherwise may need to perform more complex local computations.  
Indeed,
displacement interpolations have the practical advantage that they may have
a closed form expression, e.g., in the Gaussian case.
% \clearpage

%\subsection*{Literature review}
\paragraph*{Wasserstein barycenters and their applications}
The Wasserstein barycenter of a set of measures can be interpreted as an
interpolation or weighted Fr\'{e}chet mean of multiple measures in the
Wasserstein space~\cite{VMP-YZ:20}. In this interpolation, each measure
has an associated positive \emph{weight} that indicates its importance in
the computation of the barycenter. 
%, the collection of all weights form
%convex coefficients.  
When all weights are equal, we obtain the
\emph{standard} Wasserstein barycenter; otherwise, we obtain a
\emph{weighted} one.

There has been a strong interest in the theoretical study of Wasserstein
barycenters over the last decade; e.g., uniqueness results and connections to multi-marginal 
optimal transport~\cite{MA-GC:11};
%the study of 
interpolation of discrete measures with finite support and 
%its relationship 
connections 
with linear programming~\cite{GC-AO-EO:15,EA-SB-JM:16}; the
characterization of the barycenter as a fixed point of an operator and 
%the proposal of iterative computation procedures in
its computation~\cite{PCAE-EdB-JACA-CM:16};
the study of consistency and other statistical properties~\cite{TLG-JML:17}.  For further information, we refer to %the recent
%introductory book
~\cite{VMP-YZ:20}.

Along with the theoretical progress, many applications of Wasserstein
barycenters have emerged, as well as %computational or 
numerical approaches
for computing them.
%For example,
Wasserstein barycenters
have found applications in economics~\cite{Gc-IE:10}, image processing~\cite{JR-GP-JD-MB:12,YM:20}, computer graphics~\cite{NB-GP-MC:16}, physics~\cite{GB-LDP-PGG:12}, statistics~\cite{SS-VC-QD-DD:15,VS-MC:15}, machine learning~\cite{NC-RF-DT-AR:17,MAS-MH-NB-FN-DC-MC-GP-JLS:18}, signal processing~\cite{ANB:14,MB-PKW-UDH:15}, and biology~\cite{SG-JML-EM:13}. 
%On the other hand, 
Examples of computational approaches include: exact algorithms~\cite{GC-AO-EO:15,SC-EC-JS:18}, algorithms that use 
entropic regularization~\cite{MC-AD:14,MC-GP:16}, and algorithms based on approximations of Wasserstein distances~\cite{NB-JR-GP-HP:15}. 
Finally, the particular case of interpolating two measures, i.e., the
\emph{displacement interpolation,} 
%(which defines the pairwise interactions in our distributed algorithm), 
%is interesting in its own right because of its 
has 
applications in partial differential equations and
geometry~\cite{CV:09,FS:15}, and fluid mechanics~\cite{JDB-YB:00}.

Moreover, the Wasserstein barycenter has been interpreted as a denoised version of an original signal whose sensor measurements are each of the noisy probability distributions that are being interpolated; thus, it %the barycenter 
has found %multiple 
applications as an \emph{information fusion} algorithm~\cite{SG-JML-EM:13,ANB:14,NB-GP-MC:16,SC-EC-JS:18}. 
%
%In a related setting, randomized barycenters could also be of practical interest. 
%For example, consider we want to estimate the interpolation resulting from the measurements of various sensors of unknown accuracy or noise level. Then, a randomized barycenter will randomly weight each sensor and may provide different estimates of the true measurement. 
%
In this setting, a randomized Wasserstein barycenter, i.e., one which randomly weight each sensor, could be used to provide different estimates of the true signal when interpolating measurements of sensors of unknown accuracy or noise level.

Finally, we mention that the problem studied in this paper directly contributes to
the literature on consensus, a research area which has attracted great interest from the systems and
controls community. Specifically, we contribute to 
the fields of randomized consensus algorithms -- e.g., see~\cite[Chapter~13]{FB:22} and references therein -- and of consensus in spaces other than the
%classic 
Euclidean space -- e.g., see~\cite{RS:11b,IM-JSB:15,ANB-AD:21}.  
%Moreover, our study of stochastic asynchronous
%pairwise interactions also contributes to the field of opinion dynamics,
%since this type of interactions is used in classic opinion models,
%e.g., see~\cite{GD-DN-FA-GW:00,DA-AO:11}.  
Moreover, our work contributes to the field of opinion dynamics,
where classic opinion models also use stochastic asynchronous pairwise interactions~\cite{GD-DN-FA-GW:00,DA-AO:11}. 
Indeed, in 
our paper, 
we argue that 
displacement interpolation is a more suitable modeling approach for the
non-Bayesian updating of individual's beliefs 
%the beliefs of individuals 
in a social network,
than
%compared to the 
classic averaging approaches in the literature.

\paragraph*{Distributed algorithms for Wasserstein barycenters}
To the best of our knowledge, there is only a recent and growing literature
on distributed algorithms for Wasserstein barycenters.  The idea of
computing Wasserstein barycenters in a distributed way was first pioneered
%started 
by Bishop and Doucet in their work~\cite{ANB-AD:14} and its very recent extension~\cite{ANB-AD:21}. Their work formally shows consensus
towards the Wasserstein barycenter of the agents' initial
measures. In order to compute such consensus, each agent needs to
fully compute the Wasserstein barycenter resulting from its own measure and
the measures from all its neighbors at each iteration according to some time-varying graph. 
The work~\cite{ANB-AD:21} focuses on the case of probability measures on
the real line, with the %. Finally, the results in~\cite{ANB-AD:14} assume that the
communication between agents being deterministic, but flexible enough to
consider both synchronous and asynchronous deterministic updating.  It
 assumes agents are connected by an undirected graph.

The recent work~\cite{CAU-DD-PD-AG-AN:18} focuses on the design and
distributed implementation of a numerical solver that approximates the
standard Wasserstein barycenter when all the measures are discrete, through
the use of entropic regularization. Moreover, the recent
work~\cite{PD-DD-AG-CU-AN:18} from the same authors proposes another
distributed solver for an approximate Wasserstein barycenter with the
difference that the agents' measures may correspond to continuous
distributions. Indeed, its framework is \emph{semi-discrete}, in that the
measures to be interpolated can be continuous, but the sought measure that
serves as a proxy for the barycenter is restricted to be a discrete
measure with finite support. Therefore, we observe that the distributed
algorithms from both works~\cite{CAU-DD-PD-AG-AN:18,PD-DD-AG-CU-AN:18}
compute an approximate or a proxy of the true barycenter.  
%Remarkably, both
%works exploit the dual formulation of the Wasserstein barycenter
%optimization problem to propose their numerical algorithms.  
%Finally, 
Both 
works require synchronous updating
%, i.e., all the agents need to
%communicate simultaneously with all their neighbors at every time step,
%and all the 
and all the 
computations are performed over an undirected graph.

Our paper is more in line with the spirit of~\cite{ANB-AD:14,ANB-AD:21}, in the sense
that we propose a theoretical formulation and analysis that prove how to
generate Wasserstein barycenters from distributed computations. We do not
propose specific designs of numerical solvers for the local computations of
the agents, as it is instead performed
in~\cite{CAU-DD-PD-AG-AN:18,PD-DD-AG-CU-AN:18}. Indeed, since the local
computations in our algorithm are displacement interpolations at every time
step, any numerical method that can solve optimal transport problems can be
used, including for example any of the numerical algorithms mentioned
above.

\paragraph*{Contributions}
In this paper we propose the algorithm \emph{PaWBar}
(\emph{Pa}irwise distributed algorithm for \emph{W}asserstein
\emph{Bar}ycenters), where the agents update their
measures via pairwise stochastic and asynchronous interactions implementing
displacement interpolations. The algorithm has %two versions: 
a
\emph{directed} and a \emph{symmetric} version. As main contribution of this paper, 
we establish conditions
under which both versions compute randomized and standard Wasserstein
barycenters respectively.
In the directed case, we prove that every time the algorithm is run, a
barycenter with random convex weights is asymptotically generated as a
result of the stochastic selection of the pairwise interactions. 
It is easy to characterize the first two moments of these random weights.
%%These random weights are easily characterized by their first two moments.} 
%%
%During any pairwise directed interaction, only one agent updates its probability measure. 
%%
On the other hand, in the symmetric case, 
%both agents update their measures
%to equal a consensus value during their pairwise interaction. Although 
although 
the
interactions are stochastic, we prove that the asymptotically computed
Wasserstein barycenter is the standard one (with probability one).
In contrast to the works~\cite{CAU-DD-PD-AG-AN:18,PD-DD-AG-CU-AN:18}, our
algorithm does not require all the agents to synchronously update their
measures at every time step. 
Moreover, 
%Also in contrast
%to~\cite{CAU-DD-PD-AG-AN:18,PD-DD-AG-CU-AN:18}, 
our framework provides
convergence guarantees towards the computation of the barycenter
independently from the numerical implementation of the local computations.
We also remark that work~\cite{ANB-AD:21} is different from ours because:
(i) it only focuses on measures on the real line $\R$, while we
  consider $\R^d$, $d\geq 1$; (ii) its underlying communication graph is
  undirected at all time-steps and its changes are deterministic; (iii) it
  considers local computations of the full Wasserstein barycenter between
  agent and its neighbors; and (iv) we present sufficient conditions for
  the computation of the standard Wasserstein barycenter.
%
%dictates that each agent at every time step must locally compute the full
%Wasserstein barycenter of its neighboring agents' measures.

%, which could be
%as complex as the centralized computation of the barycenter of
%all the measures if an agent is a neighbor of all.

We now elaborate on the convergence results. We first prove convergence to a randomized or standard Wasserstein
barycenter for a class of discrete measures on $\R^d$, $d\geq 1$. In
particular, we show that the obtained barycenter interpolates the agents'
measures attained at some random finite time. However, if the initial
measures are sufficiently close in the Wasserstein space, then such time is
zero with probability one, i.e., there is an interpolation of the initial
measures. For the %particular 
case where these discrete measures are on
$\R$, the interpolation of the initial measures occurs with probability one 
irrespective of how distant they initially are from each other.
%no matter how arbitrarily distant these measures are from each other.

We then prove convergence to a randomized or standard Wasserstein
barycenter for a class of measures that are absolutely continuous with
respect to the Lebesgue measure on $\R^d$. As corollaries, we prove
convergence of continuous probability distributions on the real line, and
of a class of multivariate Gaussian distributions. 
%In the case of these Gaussian distributions, 
In the Gaussian case, 
we also provide simpler closed form expressions for
the computations of the PaWBar algorithm, and a simplified expression of
the converged barycenter. We also conjecture that the convergence to
Wasserstein barycenters holds for general absolutely continuous measures,
and present supporting numerical evidence for the general multivariate Gaussian
case.

Moreover, in all the cases mentioned above, the convergence results are
proved under %the following underlying 
general communication graphs: %with the agents having 
a strongly connected digraph
for the directed PaWBar
algorithm and a connected undirected graph for the symmetric algorithm.  For randomized
barycenters, we characterize their random convex coefficients by the limit
product of random stochastic matrices.

Finally, we prove a general consensus result for the case 
of arbitrary
%where the 
initial measures on $\R^d$ %are of arbitrary nature; 
by making %and our proofs make 
strong use
of general %metric and 
geodesic properties of the Wasserstein space. The results are
proved over a cycle graph for the directed PaWBar algorithm and %over 
a line 
graph for the symmetric case. 
%We also prove the consensus measure satisfies
%a %known 
%necessary condition for certain Wasserstein barycenters. 
On $\R$, using a result from~\cite{ANB-AD:21}, our algorithms achieve 
consensus under %over underlying 
general graphs.

\paragraph*{Paper organization}
Section~\ref{sec:prelim} has notation and preliminary
concepts. Section~\ref{sec:prop-alg} has the proposed~\emph{PaWBar
  algorithm} and its theoretical
analysis. Section~\ref{sec:proof} presents the proofs for
Section~\ref{sec:prop-alg}. Section~\ref{sec:app-ex} presents the
connection between our algorithm and opinion dynamics. 
Section~\ref{sec:concl} is the conclusion.

\section{Notation and preliminary concepts}
\label{sec:prelim} 

Let $z=(z_1,\dots,z_n)^\top\in\R^n$ denote a vector. 
%$z\in\R^n$ 
%with $i$th entry
%$z_i$. %, $i\in\until{n}$. 
Let $\norm{\cdot}_2$ denote the Euclidean
distance. 
%The vector $e_i\in\R^n$ has all of its entries zero but the $i$th
%entry is one. 
The standard unit vector $e_i\in\{0,1\}^n$ has one in its $i$th
entry.
Let $\vect{1}_n,\vect{0}_n\in\R^n$ be the all-ones and
all-zeros vectors respectively, and $I_n$ be the $n\times n$ identity
matrix. Nonnegative matrix $A\in\R^{n\times n}$ is \emph{row-stochastic}
if $A\vect{1}_n=\vect{1}_n$, and \emph{doubly-stochastic} if additionally
$A^\top\vect{1}_n=\vect{1}_n$.  The operator $\circ$ the composition of
functions, and $\otimes$ the Kronecker product.

The numbers $\lambda_1,\dots,\lambda_n$ are called \emph{convex
  coefficients} if $\lambda_i\geq 0$, $i\in\until{n}$, and
$\sum_{i=1}^n\lambda_i=1$. The vector
$\lambda:=(\lambda_1,\dots,\lambda_n)^\top$ is called a \emph{convex
  vector}.

The set of agents is $V=\until{n}$, $n\geq 2$. %, be the set of agents. %We assume the
The agents are connected according to the graph $G=(V,E)$; with set of nodes $V$ and set of edges $E$. When %the elements of 
$E$ 
%are
only has  
ordered pairs, i.e., $(i,j)\in E$ with %for some 
$i,j\in V$, $G$ is a directed
graph or \emph{digraph}. Thus, $(i,j)\in E$ 
%means 
%\pc{there is} 
%$i$ and $j$ are connected with 
is 
a directed edge going
%starting 
from $i$ to %and pointing to
$j$. When %the elements of 
$E$ 
%are 
only has 
unordered pairs, i.e., $\{i,j\}\in E$ 
%for some
with $i,j\in V$, $G$ is an undirected graph, and its edges 
%. The edges of an undirected graph 
have no sense of direction.
%When a scalar value is assigned to every edge of $G$, then $G$ is
%\emph{weighted}. 
$G$ is \emph{weighted} when a scalar value is assigned to every edge. 
An undirected graph $G$ is a line graph  
when its nodes can be labeled as
%, after an appropriate labeling of the nodes,
$E=\{\{1,2\},\dots,\{n-1,n\}\}$. A digraph $G$ is a cycle 
%whenever, after an appropriate labeling of the nodes,
when its nodes can be labeled as
$E=\{(1,2),\dots,(n-1,n),(n,1)\}$.  
%A digraph is strongly connected
%when, for any $i,j\in V$, it is possible to go from $i$ to $j$ by
%traversing the edges according to their direction; e.g., a cycle graph. 
%An undirected graph is connected \pc{when} it is
%possible to go from one node to another by traversing the edges in any
%direction; e.g., a line graph.
Given any $i,j\in V$, a digraph is strongly connected
when it is possible to go from $i$ to $j$ by
traversing the edges according to their direction (e.g., a cycle graph); and 
an undirected graph is connected when it is
possible to go from $i$ to $j$ by traversing the edges in any
direction (e.g., a line graph).

We denote the set of all probability measures on $\Omega\subseteq\R^d$ by
$\mathcal{P}(\Omega)$,
%Let $\mathcal{P}(\Omega)$ be the set of all probability measures on 
%$\Omega\subseteq\R^d$ 
and define %the subset of measures
$\mathcal{P}^2(\Omega)=\setdef{\mu\in\mathcal{P}(\Omega)}{\int_{\Omega}\norm{x}_2^2d\mu(x)<\infty}$.
Consider $\mu\in\mathcal{P}(\Omega)$. For $\Omega=\R$, let $F_\mu$ be the
cumulative distribution function, i.e., $F_{\mu}(x)=\mu((-\infty,x])$.
Let $\#$ be the \emph{push-forward operator}, which, for any Borel measurable map $\mathcal{M}:\Omega\to\Omega$, defines the linear operator $\mathcal{M}_{\#}:\mathcal{P}(\Omega)\to\mathcal{P}(\Omega)$  %characterized 
by $(\mathcal{M}_\#\mu)(B)=\mu(\mathcal{M}^{-1}(B))$ for any Borel set $B\subseteq\Omega$. We denote the support of $\mu$ by $\supp(\mu)$.

%We briefly review relevant concepts on optimal transport and Wasserstein barycenters. 
Given $\mu,\nu\in\mathcal{P}^2(\Omega)$, the $2$-Wasserstein distance 
%\footnote{For simplicity, we refer to it as the~\emph{Wasserstein distance}.} 
between $\mu$ and $\nu$ is 
\begin{equation}
\label{eq:Wass-dist}
W_2(\mu,\nu)=\left(\inf_{\gamma\in\Pi(\mu,\nu)}\int_{\Omega\times\Omega}\norm{x-y}_2^2d\gamma(x,y)\right)^{1/2}
\end{equation}
where $\Pi(\mu,\nu)$ is the set of probability measures on $\Omega\times\Omega$ with marginals $\mu$ and $\nu$, i.e., if $\gamma\in\Pi(\mu,\nu)$, then $(\pi_1)_\#\gamma=\mu$ and $(\pi_2)_\#\gamma=\nu$ with $\pi_1(x,y)=x$ and $\pi_2(x,y)=y$. The optimization problem that defines the Wasserstein distance %(i.e., the right-hand side of~\eqref{eq:Wass-dist}) 
is an \emph{%(Monge-Kantorovich) 
optimal transport problem},
and any of its solutions 
%Any solution to an optimal transport problem 
is an \emph{optimal transport plan}. Let $\gamopt(\mu,\nu)$ denote an optimal transport plan between measures $\mu$ and $\nu$. 
%Obviously, if $(X,Y)\sim\gamopt(\mu,\nu)$ then $X\sim \mu$ and $Y\sim \nu$. 
Given $\gamopt(\mu,\nu)$ such that 
%for $(X,Y)\sim \gamopt(\mu,\nu)$ we have $Y=T(X)$ for some function $T$, i.e., 
$\nu=\mathcal{T}_{\#}\mu$, we say that $\gamopt$ solves the \emph{Monge optimal transport problem} and the map $\mathcal{T}$ is called the \emph{optimal transport map} from $\mu$ to $\nu$. The Wasserstein space of order $2$ is the space $\mathcal{P}^2(\Omega)$ endowed with the distance $W_2$. In this paper, %we consider $\Omega$ to be convex, and 
we will consider $\Omega=\R^d$, $d\geq 1$.%, so that $\mu,\nu\in\PPd$. 

Given convex coefficients $\lambda_1,\dots,\lambda_n$ -- also called \emph{weights} -- and %probability 
measures to interpolate $\mu_1,\dots\mu_n\in\mathcal{P}^2(\Omega)$, $\Omega$ convex, $n\geq 2$, a \emph{Wasserstein barycenter} is defined by any solution to the convex problem
%\footnote{Some works in the literature multiply the functional to be minimized in~\eqref{eq:Wass-bar} by a factor $\frac{1}{2}$, but the set of minimizers is the same in either problem.}
%%, and some works in the literature decide to keep this factor for the simplification of some calculations.}%factors in some calculations.}
\begin{equation*}
%\label{eq:Wass-bar}
\min_{\nu\in\mathcal{P}^2(\Omega)}\sum_{i=1}^n\lambda_iW^2_2(\nu,\mu_i).
\end{equation*}
%A Wasserstein barycenter of the measures $\{\mu_i\}_{i=1}^n$ with weights $\{\lambda_i\}_{i=1}^n$ is any measure that solves equation~\eqref{eq:Wass-bar}, i.e., a \emph{minimizer} of~\eqref{eq:Wass-bar}. 
%
The \emph{displacement interpolation} between measures $\mu,\nu\in\mathcal{P}^2(\Omega)$ is the curve $\mu_{\lambda}=(\pi_\lambda)_{\#}\gamopt(\mu,\nu)$, $\lambda\in[0,1]$, where $\pi_\lambda:\Omega\times\Omega\to\Omega$ is defined by $\pi_\lambda(x,y)=(1-\lambda)x+\lambda y$. The curve $\pi_\lambda$ is known to be a \emph{constant-speed geodesic curve} in the Wasserstein space connecting $\mu_0=\mu$ to $\mu_1=\nu$~\cite{FS:15}. Moreover, for a fixed $\lambda\in[0,1]$, it is known to be the solution to the Wasserstein barycenter problem $\min_{\rho\in\mathcal{P}_2(\Omega)}((1-\lambda)W_2^2(\rho,\mu_1)+\lambda W_2^2(\rho,\mu_2))$. When there exists an optimal transport map $\mathcal{T}$, then 
%displacement interpolation can be written as 
$(\pi_\lambda)_{\#}\mu=((1-\lambda)I\!d+\lambda \mathcal{T})_\#\mu$, $\lambda\in[0,1]$, where $I\!d$ is the identity operator.
 
\section{Proposed algorithm and analysis}
\label{sec:prop-alg}

\subsection{The PaWBar algorithm}

Let $\mu_i(t)\in\PPd$, $i\in V$, represent the measure of agent $i$ at time
$t\in\{0,1,\dots\}$. Our proposed \emph{PaWBar} (\emph{P}airwise distributed
\emph{a}lgorithm for \emph{W}asserstein \emph{Bar}ycenters) algorithm has
two versions.

\begin{definition}[Directed PaWBar algorithm]
\label{def:alg}
  Let $G$ be a weighted directed graph with weight $a_{ij}\in(0,1)$ for
  $(i,j)\in E$.  Assume $\mu_i(0):=\mu_{i,0}\in\PPd$ for every $i\in V$. 
  At each time $t$, execute:
  \begin{enumerate}
  \item select a random edge $(i,j)\in E$ of $G$, independently
  according to some time-invariant probability distribution, with all edges
  having a positive selection probability;
  \item update the measure of agent $i$ by
  \begin{equation}\label{f1}
    \mu_i(t+1):=
	(\pi_{a_{ij}})_\#\gamopt(\mu_i(t),\mu_j(t))
  \end{equation}
  where $\pi_{a_{ij}}:\R^d\times{\R^d}\to\R^d$ is
  defined by $\pi_{a_{ij}}(x,y)=(1-a_{ij})x+a_{ij}y$.
    \end{enumerate}
\end{definition}

\begin{definition}[Symmetric PaWBar algorithm]
\label{def:alg_s}
  Let $G$ be an undirected graph. 
  Assume $\mu_i(0):=\mu_{i,0}\in\PPd$ for every $i\in V$. 
  At each time $t$, execute:
  \begin{enumerate}
  \item select a random edge $\{i,j\}\in E$ of $G$, independently
  according to some time-invariant probability distribution, with all edges
  having a positive selection probability;
  \item update the measures of agents $i$ and $j$ by
\begin{equation}\label{f2}
  \begin{aligned}
    \mu_i(t+1)=\mu_j(t+1):=(\pi_{1/2})_\#\gamopt(\mu_i(t),\mu_j(t))%,\\
    %\mu_j(t+1)&=(\pi_{1/2})_\#\gamopt(\mu_j(t),\mu_i(t))
  \end{aligned}
  \end{equation}
  where $\pi_{1/2}:\R^d\times{\R^d}\to\R^d$ is %the displacement interpolation 
  defined by $\pi_{1/2}(x,y)=\frac{1}{2}(x+y)$.
    \end{enumerate}
\end{definition}

\begin{remark}[Well-posedness]
The PaWBar algorithm is well-posed since %the Wasserstein space is a geodesic space and then 
the displacement interpolation %between any two measures in $\PPd$ 
provides measures in $\PPd$~\cite[Theorem~5.27]{FS:15}.
\end{remark}

\begin{remark}[Symmetry in the interpolated measure]
  Since $\pi_{1/2}(x,y)=\pi_{1/2}(y,x)$ for any $x,y\in\R^d$, the update
  rule~\eqref{f2} of the symmetric PaWBar algorithm is equivalent to
  $\mu_i(t+1):=(\pi_{1/2})_\#\gamopt(\mu_i(t),\mu_j(t))$ and
  $\mu_j(t+1):=(\pi_{1/2})_\#\gamopt(\mu_j(t),\mu_i(t))$.
\end{remark}

For simplicity, we call \emph{edge selection process} the underlying stochastic process of edge selection by the PaWBar algorithm, whose realizations are the infinite sequence of selected edges chosen every time 
the PaWBar algorithm is run. When a result is stated \emph{with probability one}, it is to be understood with respect to the induced measure by 
the edge selection process. The following concept and proposition are useful for our results.

\begin{definition}[Evolution random matrix]
\label{def:evol-rand-matr}
Consider the edge selection process from the PaWBar algorithm. Define the \emph{evolution random matrix} $A(t)$ by: 
\[
  A(t)=
  \begin{cases}
    I_n-a_{ij}e_ie_i^\top-(1-a_{ij})e_ie_j^\top,\\ \qquad \text{if } (i,j)\in E\text{ is chosen,} \\
    I_n-\frac{1}{2}(e_ie_i^\top+e_je_j^\top+e_ie_j^\top+e_je_i^\top),\\ \qquad \text{if } \{i,j\}\in E\text{ is chosen.}    
  \end{cases}
\]
\end{definition}

The following result is a direct application of~\cite[Theorem~13.1, Corollary~13.2]{FB:22}.

\begin{proposition}[Convergence of products of evolution random matrices]
\label{prop:conv_A}
Consider the PaWBar algorithm. For the directed case with a strongly connected digraph: $\lim_{t\to\infty}\prod_{\tau=0}^tA(\tau)=\vect{1}_n\lambda^\top$ for some random convex vector $\lambda$ with probability one. For the symmetric case with a connected undirected graph, $\lim_{t\to\infty}\prod_{\tau=0}^tA(\tau)=\frac{1}{n}\vect{1}_n\vect{1}_n^\top$ with probability one.
\end{proposition}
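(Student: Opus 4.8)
The statement is a convergence result for products of i.i.d.\ random stochastic matrices, so my plan is to treat $A(\tau)$ as the transition matrix of the linear recursion $x(t+1)=A(t)x(t)$ that the algorithm induces on any scalar coordinate, and to order the factors so that the most recent one multiplies on the left: I write $\Phi(t)=A(t)A(t-1)\cdots A(0)$, $\Phi(t)=A(t)\Phi(t-1)$, which is the order for which the limit actually exists. First I would record the structural facts: each $A(\tau)$ is row-stochastic with strictly positive diagonal (the only altered diagonal entries are $1-a_{ij}$ and $\tfrac{1}{2}$, both in $(0,1)$), the matrices are drawn i.i.d.\ from the finite family indexed by the edges, and every edge occurs with probability at least some $p_{\min}>0$; in the symmetric case each $A(\tau)$ is moreover doubly stochastic.

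For the convergence to a rank-one matrix I would argue column by column. Fix a column and let $v(t)$ be that column of $\Phi(t)$, so $v(t)=A(t)v(t-1)$; since $A(t)$ is row-stochastic, every entry of $v(t)$ is a convex combination of entries of $v(t-1)$, whence $\max_i v_i(t)$ is non-increasing, $\min_i v_i(t)$ is non-decreasing, and the oscillation $\max_i v_i(t)-\min_i v_i(t)$ is non-increasing. The key step is to show this oscillation tends to $0$ almost surely. Using strong connectivity together with the positive diagonals, I would exhibit a finite edge sequence of length $L$ whose associated matrix product is strictly positive, hence scrambling with Dobrushin coefficient at most $1-\delta<1$; over such a window the oscillation contracts by the factor $1-\delta$. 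Because the edges are i.i.d.\ and this sequence has probability at least $p_{\min}^L>0$, it recurs in disjoint consecutive blocks infinitely often with probability one, so the oscillation of every column tends to $0$ a.s. Hence $\Phi(t)\to\vect{1}_n\lambda^\top$, where $\lambda_k$ is the common limit of the $k$-th column; row-stochasticity passes to the limit, giving $\lambda\ge 0$ and $\vect{1}_n^\top\lambda=1$, so $\lambda$ is a convex vector, and it is measurable as an a.s.\ limit of functions of the edge process, hence a random convex vector.

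For the symmetric case I would pin down $\lambda$ using the extra structure. Each $A(\tau)$ is doubly stochastic, so $\vect{1}_n^\top A(\tau)=\vect{1}_n^\top$ and therefore $\vect{1}_n^\top\Phi(t)=\vect{1}_n^\top$ for all $t$; passing to the limit in $\Phi(t)\to\vect{1}_n\lambda^\top$ gives $n\lambda^\top=\vect{1}_n^\top$, so $\lambda=\tfrac{1}{n}\vect{1}_n$ deterministically and the limit equals $\tfrac{1}{n}\vect{1}_n\vect{1}_n^\top$.

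I expect the main obstacle to be the directed case. There is no conserved average to exploit, so the ergodic analysis cannot be shortcut, and the real content is the almost-sure contraction: building the strictly positive window product from strong connectivity and positive diagonals, and upgrading ``the rows equalize'' to genuine convergence of the product together with measurability of the random limit $\lambda$. A cleaner route, once the hypotheses in the first step are verified, would be to invoke directly the convergence theory for products of random stochastic matrices from the randomized-consensus literature, e.g.\ \cite[Chapter~13]{FB:20}.
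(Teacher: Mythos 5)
Your proof is correct, but it takes a genuinely different route from the paper's. The paper's entire proof is the ``cleaner route'' you mention in your last sentence: it verifies exactly the structural facts you record in your first step --- (i) the sequence $(A(t))_t$ is i.i.d.\ by the edge selection process, (ii) each $A(t)$ is row-stochastic with strictly positive diagonal, (iii) $\E[A(t)]$ is the adjacency matrix of a strongly connected weighted digraph --- and then invokes \cite[Theorem~13.1]{FB:20} for the directed case and \cite[Corollary~13.2]{FB:20} for the symmetric case, full stop. Your argument instead opens that black box and reproves it: the column-wise oscillation bound from row-stochasticity, the strictly positive window product built from strong connectivity plus positive diagonals (this works because left-multiplying a nonnegative matrix $M$ with $M_{pq}>0$ by a nonnegative positive-diagonal stochastic matrix $B$ gives $(BM)_{pq}\geq B_{pp}M_{pq}>0$, so positivity accumulates along a suitable edge sequence of some finite length $L$), the resulting Dobrushin contraction by $1-\delta$, and the second Borel--Cantelli lemma applied to disjoint length-$L$ blocks, each containing the scrambling sequence with probability at least $p_{\min}^L$. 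Your derivation of $\lambda=\tfrac{1}{n}\vect{1}_n$ from double stochasticity and passage to the limit in $\vect{1}_n^\top\Phi(t)=\vect{1}_n^\top$ is precisely the content by which the cited Corollary~13.2 refines Theorem~13.1. What each approach buys: the paper's is three lines and delegates all ergodic content; yours is self-contained, never needs the expected matrix (only that every edge has positive selection probability, which is what makes condition (iii) hold anyway), and explicitly fixes the factor ordering (newest factor on the left, matching the state recursion $x(t+1)=A(t)x(t)$), a point the proposition's notation $\prod_{\tau=0}^t A(\tau)$ leaves ambiguous and which genuinely matters, since forward-ordered products of i.i.d.\ stochastic matrices need not converge almost surely. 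One half-sentence of caution: Definition~\ref{def:evol-rand-matr} as typeset has sign typos (its rows sum to zero as written); your implicit reading, with row $i$ equal to $(1-a_{ij})e_i^\top+a_{ij}e_j^\top$ in the directed case and rows $i,j$ equal to $\tfrac{1}{2}(e_i+e_j)^\top$ in the symmetric case, is the intended one consistent with the update rules~\eqref{f1}--\eqref{f2}, and it is what your verification of row-stochasticity and positive diagonals actually uses.
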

%\begin{proof}
%We first prove the directed case. Observe that: (i) the sequence of random matrices $(A(t))_t$ is i.i.d due to the edge selection process, (ii) $A(t)$ has strictly positive diagonal entries and is row-stochastic for any time $t$ with probability one, (iii) $\E[A(t)]$, the expected value of the evolution random matrix, corresponds to the adjacency matrix of a strongly connected weighted digraph. Then, from these conditions (i)-(iii), we can apply~\cite[Theorem~13.1]{FB:20} and obtain the sought convergence.
%The proof for the symmetric case follows from applying~\cite[Corollary~13.2]{FB:20} instead.
%\end{proof}

\subsection{Analysis of discrete measures}
\label{sub:discr}
%
%We present a sketch of Theorem~\ref{th:discr-gen}'s proof in Section~\ref{sec:proof}; the full proof can be found in the preprint~\cite{PCV-FB:21}.

\begin{theorem}[Wasserstein barycenters for discrete measures in $\PPd$]
\label{th:discr-gen}
Consider initial measures $\{\mu_{i,0}\}_{i\in V}$, such that $\mu_{i,0}=\frac{1}{N}\sum^N_{j=1}\delta_{x^i_{j}}$, with $x^i_1,\dots,x^i_N\in\R^d$ being distinct points; i.e., $\mu_{i,0}$ is a discrete uniform measure. 
\begin{enumerate}
\item\label{it-dis-gen-dir} Consider the directed PaWBar algorithm with an underlying strongly connected digraph $G$; then, with probability one, for any $i\in V$, 
\begin{equation}
\label{eq:conv-discr-1}
W_2(\mu_i(t),\mu_{\infty})\to 0 \text{ as }t\to\infty,
\end{equation}
where the discrete uniform measure $\mu_{\infty}$ is a %solves the 
barycenter %problem
\begin{equation*}
%\label{eq:dis-gen-infty}
\mu_\infty\in\arg\min_{\nu\in\PPd}\sum_{i=1}^n\lambda_iW_2(\nu,\mu_i(T))^2
\end{equation*}
%where $\{\lambda_i\}_{i \in V}$ are random convex coefficients.
with $\lambda=(\lambda_1,\dots,\lambda_n)^\top$ being a random convex vector
satisfying $\prod^\infty_{\tau=1}A(\tau)=\vect{1}_n\lambda^\top$ with probability one,
and $T\geq 0$ being some finite random time. 
%If $\max_{i,j\in V}W_2(\mu_{i,0},\mu_{j,0})$ is sufficiently small, then $T=0$ with probability one.
\item\label{it-dis-gen-sym}Consider the symmetric PaWBar algorithm with an underlying connected undirected graph $G$; then, with probability one, for any $i\in V$, 
\begin{equation}
\label{eq:conv-discr-2}
W_2(\mu_i(t),\mu_{\infty})\to 0\text{ as }t\to\infty,
\end{equation}
where the discrete uniform measure $\mu_{\infty}$ %random 
is a %solves the 
barycenter 
%problem
\begin{equation}
\label{eq:dis-gen-infty}
\mu_\infty\in\arg\min_{\nu\in\PPd}\sum_{i=1}^nW_2(\nu,\mu_i(T))^2
\end{equation}
with $T\geq 0$ being some finite random time. 
\end{enumerate}
In either case~\ref{it-dis-gen-dir} or~\ref{it-dis-gen-sym}, there exists $\epsilon>0$ such that if $\max_{i,j\in V}W_2(\mu_{i,0},\mu_{j,0})<\epsilon$, then $T=0$ with probability one.
%If $\max_{i,j\in V}W_2(\mu_{i,0},\mu_{j,0})$ is sufficiently small, then $T=0$ with probability one.
\end{theorem}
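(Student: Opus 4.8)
The plan is to reduce the measure dynamics to a linear recursion on support points and then invoke Proposition~\ref{prop:conv_A}. First I would record the structural fact that for two uniform discrete measures on $N$ distinct points, $\mu=\frac1N\sum_k\delta_{x_k}$ and $\nu=\frac1N\sum_k\delta_{y_k}$, the optimal transport cost equals the optimal assignment cost $\frac1N\min_\sigma\sum_k\norm{x_k-y_{\sigma(k)}}_2^2$: the feasible couplings form the Birkhoff polytope, whose vertices are permutation matrices, so some optimal $\gamopt$ is induced by a permutation $\sigma$. Consequently every update~\eqref{f1}, \eqref{f2} keeps each agent's measure uniform on $N$ atoms, the new atoms being convex combinations $(1-a_{ij})x^i_k+a_{ij}x^j_{\sigma(k)}$ (directed) or $\tfrac12(x^i_k+x^j_{\sigma(k)})$ (symmetric) of matched atoms. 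The whole state is thus captured by the labeled positions $x^i_k(t)\in\R^d$, $i\in V$, $k\in\until{N}$, up to a per-agent relabeling ambiguity.

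Next I would isolate the configurations on which the dynamics linearize. Call a labeled configuration \emph{coherent} if, for every pair $i,j$, the identity permutation is optimal, i.e.\ $\sum_k\norm{x^i_k-x^j_k}_2^2=\min_\sigma\sum_k\norm{x^i_k-x^j_{\sigma(k)}}_2^2$. On a coherent configuration the matchings need not be recomputed: an interaction replaces $x^i_k$ by a convex combination of $x^i_k$ and $x^j_k$ for each fixed label $k$, so for each label $k$ the vector $X_k(t)=(x^1_k(t),\dots,x^n_k(t))\in(\R^d)^n$ obeys exactly $X_k(t+1)=(A(t)\otimes I_d)X_k(t)$, with $A(t)$ the evolution random matrix of Definition~\ref{def:evol-rand-matr}. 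I would then show coherence is preserved by the dynamics, since interpolation moves the updated atoms along the optimal couplings and cannot create a cheaper transposition against any third agent. Granting invariance, Proposition~\ref{prop:conv_A} gives $\prod_{\tau\geq T}A(\tau)\to\vect{1}_n\lambda^\top$ (directed) or $\tfrac1n\vect{1}_n\vect{1}_n^\top$ (symmetric), hence $x^i_k(t)\to\bar x_k$ for all $i$, where $\bar x_k=\sum_\ell\lambda_\ell x^\ell_k(T)$ (resp.\ $\tfrac1n\sum_\ell x^\ell_k(T)$), proving $W_2(\mu_i(t),\mu_\infty)\to0$ with $\mu_\infty=\frac1N\sum_k\delta_{\bar x_k}$.

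To identify $\mu_\infty$ as the claimed barycenter I would again use coherence of $\{\mu_\ell(T)\}$. For uniform discrete measures the barycenter objective reduces to choosing a joint matching of the $n$ atom families and then averaging: writing a candidate as $\rho=\frac1N\sum_k\delta_{z_k}$ and letting $\sigma_\ell$ be the optimal assignment from $\rho$ to $\mu_\ell(T)$, one has $\sum_\ell\lambda_\ell W_2^2(\rho,\mu_\ell(T))=\frac1N\sum_k\sum_\ell\lambda_\ell\norm{z_k-x^\ell_{\sigma_\ell(k)}}_2^2$, which for fixed matchings is minimized label-wise by the weighted mean. Minimizing jointly over the matchings, coherence forces the identity joint assignment to be optimal, so the minimizer is $z_k=\bar x_k$ and $\rho=\mu_\infty$; this is exactly~\eqref{eq:dis-gen-infty} in the symmetric case and its weighted analogue, with weights $\lambda$, in the directed case.

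Finally, and this is the main obstacle, I must produce the finite random time $T$ at which the configuration becomes coherent and show coherence is permanent thereafter. The idea is that each interaction is a displacement interpolation, hence a geodesic average in the Wasserstein space, so the diameter $D(t)=\max_{i,j}W_2(\mu_i(t),\mu_j(t))$ is non-increasing; combined with strong connectivity and positive edge-selection probabilities, $D(t)$ contracts, so with probability one it drops below a geometric threshold $\varepsilon$ past which the optimal assignments between any two agents are forced to agree and thus define a coherent labeling. I would set $T$ to be the first such time; since the interpolation step is non-expansive, once $D(T)<\varepsilon$ coherence persists for all $t\geq T$, licensing the linear analysis above. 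The delicate points are (a) choosing $\varepsilon$ uniformly so that $D(t)<\varepsilon$ genuinely pins down a single coherent labeling over the moving atoms, and (b) ruling out relabeling oscillations that could keep $D(t)$ bounded away from $\varepsilon$; both I would control through non-expansiveness of geodesic averaging together with the finiteness of the permutation group. For the last assertion, if $\max_{i,j\in V}W_2(\mu_{i,0},\mu_{j,0})$ already lies below $\varepsilon$ then the initial configuration is coherent, so $T=0$ with probability one and $\mu_\infty$ is the barycenter of the initial measures.
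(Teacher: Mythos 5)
Your proposal reproduces the architecture of the paper's own proof (permutation couplings for uniform discrete measures, a random finite time $T$ after which all optimal matchings are frozen and mutually consistent, label-wise linearization $X_k(t+1)=(A(t)\otimes I_d)X_k(t)$ followed by Proposition~\ref{prop:conv_A}, and identification of the limit as the weighted label-wise mean), but the step you yourself flag as ``the main obstacle'' contains two genuine gaps, and it is precisely where the paper does its real work. First, your claim that the diameter $D(t)=\max_{i,j}W_2(\mu_i(t),\mu_j(t))$ is non-increasing is unjustified and false in general: $(\PPd,W_2)$ with $d\geq 2$ is positively curved, not NPC, so the distance to a third measure need not decrease along a displacement geodesic. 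Concretely, if $(i,j)$ fires, bounding $W_2(\mu_i(t+1),\mu_p(t))$ through $x^i_k(t+1)=(1-a_{ij})x^i_k(t)+a_{ij}x^j_{\sigma_{ij,t}(k)}(t)$ produces a term that is the cost of the coupling of $\mu_j(t)$ and $\mu_p(t)$ induced by $\sigma_{ip,t}\circ\sigma_{ji,t}$, which is suboptimal (and can exceed $D(t)$) unless the consistency identity $\sigma_{ip,t}\circ\sigma_{pj,t}=\sigma_{ij,t}$ already holds --- but that identity is exactly what becomes available only after your time $T$, so the Euclidean convexity argument is circular before coherence. The paper never claims diameter monotonicity: its claim~\ref{aux:claim_3} is proved constructively, selecting the edges of a shortest directed path in reverse order, each many times, exploiting that the directed update leaves the head agent's measure untouched; and the symmetric case requires a genuinely different spanning-tree path-sum potential (claim~\ref{aux:claim_otr_2_1}) precisely because both endpoints move there, a distinction your uniform treatment of the two cases misses.

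Second, even granting $D(T)<\varepsilon$, smallness of the diameter does not by itself ``pin down a single coherent labeling,'' nor make coherence permanent: atoms carrying different labels may be nearly coincident, so the assignment problems can have near-ties and the optimal permutations can switch later. The paper's proof spends most of its length closing exactly this hole: the quantitative gap condition~\eqref{eq:cond-clos} (identity-matching cost below $\epsilon$, every other permutation above $2\epsilon$) need not hold at the first time the agents are $\epsilon$-close; it is enforced by showing the label-wise convex hulls $\mL_k(t)$ are nested, hence the inter-hull distances $d_{pq}(t)$ are non-decreasing, and, in the residual case $2\epsilon\geq\min_{\bar\sigma\neq I\!d}\bigl(\tfrac1N\sum_k d^2_{k\bar\sigma(k)}(t)\bigr)^{1/2}$, by contracting further to a later time $\bar t$ at which the gap holds for a smaller $\epsilon'$, after which claims~\ref{aux:claim_otr_1} and~\ref{aux:claim_otr_22} persist by induction. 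Your appeal to ``non-expansiveness plus finiteness of the permutation group'' does not substitute for this, since the separation threshold is configuration-dependent and may be zero at your $T$. A smaller omission: you do not verify that the $N$ interpolated atoms remain distinct (the paper derives this from cyclical monotonicity of the optimal plan), which is needed for the measures to stay uniform on $N$ points and for the barycenter to be discrete uniform as claimed. With these points supplied, the remainder of your sketch (linearization, Proposition~\ref{prop:conv_A}, label-wise weighted-mean optimality of $\mu_\infty$, and $T=0$ under initial closeness) coincides with the paper's argument.
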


\begin{corollary}[Wasserstein barycenters for discrete measures in $\PP(\R)$]
\label{co:discr-uni}
Consider initial measures $\{\mu_{i,0}\}_{i\in V}$, such that $\mu_{i,0}=\frac{1}{N}\sum^N_{j=1}\delta_{x^i_{j}}$, with $x^i_1,\dots,x^i_N\in\R$ such that $x^i_1<\dots< x^i_N$. 
%, i.e., $\mu_i$ is a discrete uniform measure. 
Then, the directed, resp. symmetric, PaWBar algorithm computes a randomized, resp. standard, Wasserstein barycenter of the initial measures under a strongly connected digraph, resp. connected undirected graph. 
\end{corollary}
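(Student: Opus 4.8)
The plan is to deduce the corollary from Theorem~\ref{th:discr-gen} by showing that, in the one-dimensional setting, the random time $T$ appearing there equals $0$ with probability one, and that the resulting limit is exactly a barycenter of the \emph{initial} measures. The whole argument rests on the special structure of optimal transport on $\R$: between two discrete uniform measures supported on $N$ distinct points, the optimal transport plan is the monotone (increasing) rearrangement, i.e.\ it matches the $k$th smallest atom of $\mu_{i,0}$ with the $k$th smallest atom of $\mu_{j,0}$. This is the fact I would invoke first.

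Second, I would establish that the class of discrete uniform measures on $N$ \emph{distinct, strictly ordered} points is invariant under both PaWBar updates. Writing $x^i_1(t)<\dots<x^i_N(t)$ for the sorted atoms of $\mu_i(t)$, the monotone matching shows that a directed update along $(i,j)$ produces atoms $(1-a_{ij})x^i_k(t)+a_{ij}x^j_k(t)$, and since a convex combination of two strictly increasing sequences is again strictly increasing, these $N$ atoms remain distinct and ordered; the symmetric update with $\pi_{1/2}$ is the same computation. Hence no atoms ever collide and, by induction on $t$, every $\mu_i(t)$ stays uniform on $N$ ordered points and every pairwise optimal plan encountered by the algorithm is again the monotone one. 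This is precisely the ``linear and consistent'' regime underlying Theorem~\ref{th:discr-gen}, so it holds from $t=0$ and therefore $T=0$ with probability one, irrespective of the magnitude of $\max_{i,j}W_2(\mu_{i,0},\mu_{j,0})$. I expect this invariance step --- ruling out collisions for \emph{all} time via monotonicity --- to be the crux of the argument; in $\R^d$ with $d\ge 2$ it is exactly what can fail and what forces a positive $T$.

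Third, with $T=0$ in hand the support coordinates decouple across rank: for each fixed $k$, the vector $\xi_k(t):=(x^1_k(t),\dots,x^n_k(t))^\top$ satisfies the consensus recursion $\xi_k(t+1)=A(t)\xi_k(t)$ with the \emph{same} random matrix $A(t)$ for every $k$ (the selected edge and its weight are common to all ranks). Applying Proposition~\ref{prop:conv_A} to this common matrix product gives, in the directed case, $\xi_k(t)\to(\lambda^\top\xi_k(0))\vect{1}_n$ for a random convex vector $\lambda$, and in the symmetric case the same limit with $\lambda=\tfrac1n\vect{1}_n$. Thus every agent's measure converges to the uniform measure on the atoms $\sum_{i}\lambda_i x^i_k(0)$, $k\in\{1,\dots,N\}$.

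Finally, I would identify this limit as a Wasserstein barycenter of the initial measures through the quantile-function characterization of barycenters on $\R$: the quantile function of a weighted $W_2$-barycenter is the $\lambda$-weighted average of the quantile functions of the $\mu_{i,0}$, and for uniform measures on $N$ ordered atoms this average is the step function taking value $\sum_i\lambda_i x^i_k(0)$ on $((k-1)/N,k/N]$ --- exactly the limit found above. Hence the directed algorithm outputs the barycenter with the random weights $\lambda$ (a randomized barycenter) and the symmetric algorithm outputs the equal-weight, i.e.\ standard, barycenter, completing the proof. The only genuinely new work beyond Theorem~\ref{th:discr-gen} is the collision-free invariance yielding $T=0$; the remaining steps are specializations of Proposition~\ref{prop:conv_A} and of known one-dimensional optimal-transport facts.
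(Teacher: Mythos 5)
Your proposal is correct and takes essentially the same route as the paper's proof: the monotone (rank-to-rank) optimal matching for sorted discrete uniform measures on $\R$, preservation of the strict ordering under both updates by induction so that the dynamics are linear from time zero (i.e., $T=0$ regardless of initial distances), convergence via Proposition~\ref{prop:conv_A}, and identification of the limit $\big(\sum_i\lambda_i x^i_k(0)\big)_k$ as the weighted barycenter. The only cosmetic difference is the final identification step, where you invoke the one-dimensional quantile-function characterization of barycenters while the paper directly verifies that $\sum_{i=1}^n\lambda_i\bx^i(0)$ minimizes the weighted objective over sorted support vectors; the two arguments are equivalent (and yours is, if anything, slightly more self-contained about minimizing over all of $\mathcal{P}^2(\R)$).
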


\begin{remark}[Discussion of our results]
\begin{enumerate}
\item The setting of Theorems~\ref{th:discr-gen} and Corollary~\ref{co:discr-uni} 
has found applications in computational geometry, computer graphics and digital image processing; e.g., see~\cite{JR-GP-JD-MB:12,MC-AD:14,NB-JR-GP-HP:15}. 
\item In Theorem~\ref{th:discr-gen}, if all initial measures are sufficiently close in the Wasserstein space, then the PaWBar algorithm computes a %one of their %Wasserstein 
barycenter. This sufficient condition is not a problem in practical applications where the barycenter is used as an interpolation among measures that are known to be similar (e.g., measurements of the same object under noise). On the other hand, Corollary~\ref{co:discr-uni} tells us that the initial measures in $\Pa(\R)$ can be arbitrarily distant from each other.% and still the PaWBar algorithm will compute one of their barycenters.
\end{enumerate}
\end{remark}

\subsection{Analysis of absolutely continuous measures}
\label{sub:cont}

We consider measures that are absolutely continuous with respect to the
Lebesgue measure.  For any such measures $\mu,\nu\in\PPd$, there exists a
unique optimal transport map from $\mu$ to $\nu$, which we denote
by $T_\mu^\nu$; we also denote $T^\mu_\mu=I\!d$. It is also known that there exists
a unique Wasserstein barycenter when all the interpolated measures are 
of this class~\cite{MA-GC:11}.
%absolutely continuous with respect to the Lebesgue measure~\cite{MA-GC:11}.

We focus on the following class of
measures that form a \emph{compatible collection} (based on~\cite[Definition~2.3.1]{VMP-YZ:20}): 
%. According to~\cite[Definition~2.3.1]{VMP-YZ:20}, 
a collection of absolutely
continuous measures $\C\subset\PPd$ 
%is \emph{compatible} if 
where 
for all
$\nu,\mu,\gamma\in\mathcal{C}$, we have $T_{\gamma}^\mu\circ
T_{\nu}^\gamma=T_{\nu}^\mu$. It is known that a displacement interpolation between any two absolutely continuous measures results in a curve of absolutely continuous measures~\cite{FS:15}.
%
% in $L_2(\nu)$.
This motivates the following definition: we say a compatible collection is \emph{closed under interpolation} whenever the union of this set and any measure resulting from the displacement interpolation between any of its elements results in another compatible collection with the same property.
%we say a set of absolutely continuous measures is \emph{compatible under interpolation} when the set of measures form a compatible collection and its union with any measure obtained by displacement interpolation between its members forms another compatible collection.}

\begin{theorem}[Wasserstein barycenters for absolutely continuous measures in $\PPd$]
\label{th:cont-gen}
Consider initial measures $\{\mu_{i,0}\}_{i\in V}$ that are absolutely continuous with respect to the Lebesgue measure and that form a compatible collection 
closed under interpolation.
%\pc{$\cup_{\tau=0}^t\{\mu_{i}(\tau)\}_{i\in V}$ is compatible under interpolation for every $t$ with probability one.}
 % along with some measure $\gamma\in\PPd$.
Let $\gamma\in\{\mu_{i,0}\}_{i\in V}$.
\begin{enumerate}
\item\label{it-cont-gen-dir} Consider the directed PaWBar algorithm with an underlying strongly connected digraph $G$; then, with probability one, for any $i\in V$, 
\begin{equation*}
%\label{eq:conv-gen}
W_2(\mu_i(t),\mu_{\infty})\to 0\text{ as }t\to\infty,
\end{equation*}
where the absolutely continuous measure $\mu_{\infty}=\left(\sum^n_{j=1}\lambda_j T_{\gamma}^{\mu_{j,0}}\right)_{\#}\gamma$ is the barycenter
\begin{equation}
\label{eq:dis-cont-infty}
\mu_\infty=\arg\min_{\nu\in\PPd}\sum_{i=1}^n\lambda_iW_2(\nu,\mu_{i,0})^2
\end{equation}
%where $\{\lambda_i\}_{i\in V}$ are random convex coefficients.
with $\lambda=(\lambda_1,\dots,\lambda_n)^\top$ being a random convex vector satisfying $\prod^\infty_{\tau=1}A(\tau)=\vect{1}_n\lambda^\top$ with probability one.
\item\label{it-cont-gen-sym}Consider the symmetric PaWBar algorithm with an underlying connected undirected graph $G$; then, with probability one, for any $i\in V$, 
\begin{equation*}
%\label{eq:conv-gen}
W_2(\mu_i(t),\mu_{\infty})\to 0\text{ as }t\to\infty,
\end{equation*}
where the absolutely continuous measure $\mu_{\infty}=\left(\frac{1}{n}\sum^n_{j=1} T_{\gamma}^{\mu_{j,0}}\right)_{\#}\gamma$ is the barycenter
\begin{equation*}
%\label{eq:dis-gen-infty-cont}
\mu_\infty=\arg\min_{\nu\in\PPd}\sum_{i=1}^nW_2(\nu,\mu_{i,0})^2.
\end{equation*}
\end{enumerate}
\end{theorem}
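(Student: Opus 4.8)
The plan is to linearize the entire dynamics by transporting everything into the Hilbert space $L^2(\gamma)$ of $\gamma$-square-integrable maps, using the fixed reference measure $\gamma\in\{\mu_{i,0}\}_{i\in V}$. For each agent define $S_i(t):=T_\gamma^{\mu_i(t)}$, the optimal transport map from $\gamma$ to $\mu_i(t)$, so that $\mu_i(t)=(S_i(t))_\#\gamma$ and $S_i(0)=T_\gamma^{\mu_{i,0}}$. The goal is to show that, under the PaWBar update, the stacked maps $(S_1(t),\dots,S_n(t))$ evolve by exactly the linear recursion governed by the evolution random matrix, i.e. $S_i(t+1)=(1-a_{ij})S_i(t)+a_{ij}S_j(t)$ in the directed case (and the symmetric analogue for both $i$ and $j$), which compactly reads $S(t+1)=A(t)S(t)$. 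Once this is in place, convergence follows from Proposition~\ref{prop:conv_A} applied in $L^2(\gamma)$, and the identification of the limit as the barycenter follows from the known variational characterization of barycenters of compatible families.

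The central computation, carried out by induction on $t$, is the collapse of the displacement interpolation into a convex combination of maps. Assuming the family $\{\gamma\}\cup\{\mu_k(t)\}_{k\in V}$ is compatible, the compatibility identity together with uniqueness of optimal maps between absolutely continuous measures yields the factorization $T_{\mu_i(t)}^{\mu_j(t)}=S_j(t)\circ S_i(t)^{-1}$. Substituting $\mu_i(t)=(S_i(t))_\#\gamma$ into the update $\mu_i(t+1)=((1-a_{ij})I\!d+a_{ij}T_{\mu_i(t)}^{\mu_j(t)})_\#\mu_i(t)$ and composing with $S_i(t)$ then gives $\mu_i(t+1)=((1-a_{ij})S_i(t)+a_{ij}S_j(t))_\#\gamma$, so the candidate new map is the convex combination $(1-a_{ij})S_i(t)+a_{ij}S_j(t)$. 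The main obstacle is to close the induction: one must verify that this convex combination is genuinely $T_\gamma^{\mu_i(t+1)}$ and that the updated family remains compatible. Since $S_i(t)$ and $S_j(t)$ are gradients of convex functions (Brenier), so is any convex combination, which certifies optimality of the new map from $\gamma$ and absolute continuity of $\mu_i(t+1)$; the more delicate point is preservation of mutual compatibility. I would handle this by observing that every iterate lies in the convex hull (in $L^2(\gamma)$) of the initial maps $\{T_\gamma^{\mu_{k,0}}\}_{k\in V}$ -- because products of row-stochastic matrices are row-stochastic -- and by showing, via the structure theory of compatible collections in \cite[Chapter~2]{VMP-YZ:20}, that this convex hull consists of optimal maps from $\gamma$ whose push-forwards form a compatible family; this keeps the factorization valid at every step.

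With the linear recursion established, the convergence is immediate. Writing $S_i(t)=\sum_{k\in V}\big[\prod_{\tau=0}^{t-1}A(\tau)\big]_{ik}\,S_k(0)$ as a finite convex combination of the fixed maps $S_k(0)=T_\gamma^{\mu_{k,0}}$, Proposition~\ref{prop:conv_A} gives, with probability one, $\big[\prod_\tau A(\tau)\big]_{ik}\to\lambda_k$ in the directed case and $\to 1/n$ in the symmetric case. Hence $S_i(t)\to S_\infty$ in $L^2(\gamma)$, where $S_\infty=\sum_{j}\lambda_j T_\gamma^{\mu_{j,0}}$ (resp. $\tfrac1n\sum_j T_\gamma^{\mu_{j,0}}$). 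Because the deterministic coupling $(S_i(t),S_\infty)_\#\gamma$ is a valid transport plan between $\mu_i(t)$ and $\mu_\infty:=(S_\infty)_\#\gamma$, we obtain the elementary bound $W_2(\mu_i(t),\mu_\infty)\le\norm{S_i(t)-S_\infty}_{L^2(\gamma)}\to 0$, which is the claimed convergence.

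Finally, to confirm that $\mu_\infty$ is the asserted barycenter, I would invoke the variational characterization for compatible families. Since $\mu_\infty$ itself lies in the above convex hull it is compatible with the collection, so $T_{\mu_\infty}^{\mu_{j,0}}=T_\gamma^{\mu_{j,0}}\circ S_\infty^{-1}$, and therefore $\sum_j\lambda_j T_{\mu_\infty}^{\mu_{j,0}}=\big(\sum_j\lambda_j T_\gamma^{\mu_{j,0}}\big)\circ S_\infty^{-1}=S_\infty\circ S_\infty^{-1}=I\!d$ holds $\mu_\infty$-a.e. This is exactly the first-order optimality condition for \eqref{eq:dis-cont-infty}, and uniqueness of the barycenter for absolutely continuous measures \cite{MA-GC:11} then pins down $\mu_\infty$. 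The symmetric case is identical with $\lambda_j\equiv 1/n$.
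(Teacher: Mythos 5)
Your proposal is correct and follows essentially the same route as the paper's proof: you linearize the dynamics into the optimal maps $S_i(t)=T_\gamma^{\mu_i(t)}$ from the common reference $\gamma$, show by induction (using compatibility) that the update collapses to the linear recursion $S(t+1)=A(t)S(t)$, apply Proposition~\ref{prop:conv_A}, and identify the limit as the barycenter of the compatible family. The differences are only local: you certify the inductive step via Brenier's gradient-of-convex-function criterion and identify the limit through the first-order condition $\sum_j\lambda_j T_{\mu_\infty}^{\mu_{j,0}}=I\!d$ plus uniqueness, whereas the paper performs explicit push-forward algebra for the compatibility induction and cites \cite[Theorem~3.1.9]{VMP-YZ:20} directly; your explicit coupling bound $W_2(\mu_i(t),\mu_\infty)\le\norm{S_i(t)-S_\infty}_{L^2(\gamma)}$ is, if anything, slightly more careful than the paper's pointwise-in-$x$ argument.
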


The following corollary considers examples of measures relevant to our previous theorem. We use the term distribution and measure interchangeably for well-known probability measures with continuous distributions.

\begin{corollary}[Wasserstein barycenters for classes of absolutely continuous measures]
\label{co:discr-gen-uni}
Consider that initially either 
\begin{enumerate}
\item\label{it:co-1} all agents have a probability measure in $\PP(\R)$ with continuous distribution; or
\item \label{it:co-2} one agent has the standard Gaussian distribution on $\PPd$ and any other agent $i\in V$ has a Gaussian distribution $\mu_{i,0}=\mathcal{N}(\vect{0}_d,\Sigma_{i,0})$, positive definite matrix $\Sigma_{i,0}\in\R^{d\times d}$; and there exists an orthogonal matrix $U$ such that $D_{k,o}=U\Sigma_{k,0}U^\top$ is diagonal for all $k$.
\end{enumerate}
Consider the directed PaWBar algorithm with an underlying strongly connected digraph $G$. 
Then, with probability one, $W_2(\mu_i(t),\mu_{\infty})\to 0$ as $t\to\infty$ for any $i\in V$,
where $\mu_{\infty}$ is the Wasserstein barycenter of the initial measures. In particular,
\begin{itemize}
\item for case~\ref{it:co-1}, $\mu_{\infty}=\left(\sum^n_{j=1}\lambda_j F^{-1}_{\mu_{j,0}}\circ F_{\mu_{i,0}}\right)_{\#}\mu_{i,0}=\left(\sum^n_{j=1}\lambda_j F^{-1}_{\mu_{j,0}}\right)_{\#}\mathcal{L}$, with $\mathcal{L}$ being the Lebesgue measure on $[0,1]$; and
\item for case~\ref{it:co-2}, $\mu_{\infty}=\mathcal{N}(\vect{0}_d,\Sigma_\infty)$ with %$m_{\infty}=\sum_{j=1}^n\lambda_j m_{j,0}$ and 
$\Sigma_{\infty}\in\R^{d\times d}$ being a positive definite matrix that satisfies $\Sigma_{\infty}=\sum_{j=1}^n\lambda_j(\Sigma^{1/2}_{\infty}\Sigma_{j,0}\Sigma^{1/2}_{\infty})^{1/2}$; 
\end{itemize} 
with $\lambda=(\lambda_1,\dots,\lambda_n)^\top$ being a random convex vector such that $\prod^\infty_{\tau=1}A(\tau)=\vect{1}_n\lambda^\top$ with probability one.

Moreover, all the previous results also hold for the symmetric PaWBar algorithm when $G$ is a connected undirected graph, with the difference that the barycenters are now the standard one, i.e., with 
$\lambda=(\frac{1}{n},\dots,\frac{1}{n})^\top$ in the previous bullet points.
\end{corollary}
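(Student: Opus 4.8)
The plan is to obtain both cases as specializations of Theorem~\ref{th:cont-gen}. For each family two things must be checked: (a) that it is a compatible collection of absolutely continuous measures, so that Theorem~\ref{th:cont-gen} applies; and (b) that the abstract barycenter formula $\mu_\infty=(\sum_{j=1}^n\lambda_j T_\gamma^{\mu_{j,0}})_\#\gamma$ reduces to the stated closed forms once the optimal transport maps are written explicitly. Once this is done, the symmetric statements follow verbatim from part~\ref{it-cont-gen-sym} of Theorem~\ref{th:cont-gen} upon setting $\lambda=(\tfrac1n,\dots,\tfrac1n)^\top$.

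For case~\ref{it:co-1} I would recall that for measures on $\R$ with continuous distribution the unique optimal transport map is the monotone rearrangement $T_\mu^\nu=F_\nu^{-1}\circ F_\mu$. Compatibility then follows from the telescoping identity $T_\gamma^\mu\circ T_\nu^\gamma=(F_\mu^{-1}\circ F_\gamma)\circ(F_\gamma^{-1}\circ F_\nu)=F_\mu^{-1}\circ F_\nu=T_\nu^\mu$, valid on the support of $\nu$, which is exactly the compatibility condition after pushing forward. Choosing the reference $\gamma=\mu_{i,0}$ in Theorem~\ref{th:cont-gen} gives the first expression; the second follows from the push-forward composition law together with the elementary fact $(F_{\mu_{i,0}})_\#\mu_{i,0}=\mathcal{L}$, since $(\sum_j\lambda_jF_{\mu_{j,0}}^{-1}\circ F_{\mu_{i,0}})_\#\mu_{i,0}=(\sum_j\lambda_jF_{\mu_{j,0}}^{-1})_\#\big((F_{\mu_{i,0}})_\#\mu_{i,0}\big)$.

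For case~\ref{it:co-2} I would use the closed form of the optimal map between Gaussians, which under the commutativity hypothesis $\Sigma_{i,0}\Sigma_{j,0}=\Sigma_{j,0}\Sigma_{i,0}$ simplifies, since $(\Sigma_1^{1/2}\Sigma_2\Sigma_1^{1/2})^{1/2}=\Sigma_1^{1/2}\Sigma_2^{1/2}$, to the affine map $x\mapsto m_2+\Sigma_2^{1/2}\Sigma_1^{-1/2}(x-m_1)$. Compatibility of the commuting-Gaussian family then reduces to checking that the composition of two such affine maps has the expected matrix part, which again uses that all the square roots commute. Taking the reference $\gamma$ to be the standard Gaussian (which belongs to the collection and commutes with every $\Sigma_{j,0}$), the map $T_\gamma^{\mu_{j,0}}$ becomes $x\mapsto m_{j,0}+\Sigma_{j,0}^{1/2}x$, so $\sum_j\lambda_jT_\gamma^{\mu_{j,0}}$ is affine with slope $\sum_j\lambda_j\Sigma_{j,0}^{1/2}$ and intercept $m_\infty=\sum_j\lambda_jm_{j,0}$. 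Pushing the standard Gaussian forward yields $\mu_\infty=\mathcal{N}(m_\infty,\Sigma_\infty)$ with $\Sigma_\infty=(\sum_j\lambda_j\Sigma_{j,0}^{1/2})^2$; and since $\Sigma_\infty^{1/2}=\sum_j\lambda_j\Sigma_{j,0}^{1/2}$ commutes with each $\Sigma_{j,0}$, one verifies $\sum_j\lambda_j(\Sigma_\infty^{1/2}\Sigma_{j,0}\Sigma_\infty^{1/2})^{1/2}=\Sigma_\infty^{1/2}\sum_j\lambda_j\Sigma_{j,0}^{1/2}=\Sigma_\infty$, recovering the stated fixed-point equation.

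I expect the main obstacle to be the verification of compatibility in the Gaussian case: one must manipulate products of matrix square roots, confirm $(\Sigma_1^{1/2}\Sigma_2\Sigma_1^{1/2})^{1/2}=\Sigma_1^{1/2}\Sigma_2^{1/2}$ under commutativity, and then check that the composed affine maps genuinely satisfy the compatibility identity of the definition rather than merely composing formally. The one-dimensional case and the final fixed-point verification are comparatively routine, and the symmetric versions require no new argument.
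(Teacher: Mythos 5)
Your proposal is correct and follows essentially the same route as the paper: both deduce the corollary by specializing Theorem~\ref{th:cont-gen} to the two families, using the monotone rearrangement $T_\mu^\nu=F_\nu^{-1}\circ F_\mu$ together with $(F_{\mu_{i,0}})_\#\mu_{i,0}=\mathcal{L}$ in the scalar case, and the affine Gaussian transport maps in the commuting case. The only difference is that where the paper cites \cite[Section~2.3]{VMP-YZ:20} for compatibility of both collections and \cite[Theorem~2.4]{YC-TTG-AT:19} for the Gaussian fixed-point characterization, you verify these facts by direct computation, which in addition yields the explicit closed form $\Sigma_\infty=\big(\sum_{j=1}^n\lambda_j\Sigma_{j,0}^{1/2}\big)^2$ from which the stated fixed-point equation follows.
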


For Gaussian distributions, since any displacement interpolation results in Gaussian distributions with a closed-form expression~\cite{YC-TTG-AT:19}, the PaWBar algorithm can be written as follows.

\begin{definition}[PaWBar algorithm for Gaussian distributions]
\label{def:gauss}
Assume any agent $i\in V$ has an initial distribution $\mu_{i,0}=\mathcal{N}(m_{i,0},\Sigma_{i,0})$ with $m_{i,0}\in\R^d$ and $\Sigma_{i,0}\in\R^{d\times d}$ being a positive definite matrix. 
At any time $t$, let $m_i(t)$ and $\Sigma_i(t)$ be the mean and covariance matrix associated to agent $i\in V$.
\begin{enumerate}
\item\label{it:gauss-dir} For the directed PaWBar algorithm, if $(i,j)\in E$ is selected at time $t$, update
  the Gaussian distribution of agent $i$ according to:
  \begin{equation}\label{eq:g1}
  \begin{aligned}
   & m_i(t+1):=\; (1-a_{ij})m_i(t)+a_{ij}m_j(t),\\
%    &\Sigma_i(t+1):=\;(1-a_{ij})^2\Sigma_i(t)+a_{ij}^2\Sigma_j(t)\\
%     &\;+a_{ij}(1-a_{ij})\Big((\Sigma_i(t)\Sigma_j(t))^{\frac{1}{2}}+(\Sigma_j(t)\Sigma_i(t))^{\frac{1}{2}}\Big).
	&\Sigma_i(t+1)=\Sigma_i(t)^{-1/2}((1-a_{ij})\Sigma_i(t)\\
	&\;+a_{ij}(\Sigma_i(t)^{1/2}\Sigma_j(t)\Sigma_i(t)^{1/2} )^{1/2})^2\Sigma_i(t)^{-1/2}.
  \end{aligned} 
  \end{equation}
\item \label{it:gauss-sym} For the symmetric PaWBar algorithm, if $\{i,j\}\in E$ is selected at time $t$, update
the Gaussian distributions of agents $i$ and $j$ by using~\eqref{eq:g1} with $a_{ij}=1/2$ and $m_i(t+1)=m_j(t+1)$, $\Sigma_i(t+1)=\Sigma_j(t+1)$.
%
%  according to:
%  \begin{equation}\label{eq:g1-u}
%  \begin{aligned}
%    &m_i(t+1):= m_j(t+1)= \frac{1}{2}(m_i(t)+m_j(t)),\\
%    	&\Sigma_i(t+1):=\Sigma_j(t+1)=\frac{1}{4}\Big(\Sigma_i(t)+\Sigma_j(t)\\
%    	&\,+(\Sigma_i(t)\Sigma_j(t))^{\frac{1}{2}}+(\Sigma_j(t)\Sigma_i(t))^{\frac{1}{2}}\Big).
%%
%%	&\Sigma_i(t+1)=\Sigma_j(t+1)=\frac{1}{4}\Sigma_i(t)^{-1/2}(\Sigma_i(t)+(\Sigma_i(t)^{1/2}\Sigma_j(t)\Sigma_i(t)^{1/2} )^{1/2})^2\Sigma_i(t)^{-1/2}.	
%%	
%%	\\
%%	&\Sigma_j(t+1)=\frac{1}{4}\Sigma_j(t)^{-1/2}(\Sigma_j(t)+(\Sigma_j(t)^{1/2}\Sigma_i(t)\Sigma_j(t)^{1/2} )^{1/2})^2\Sigma_j(t)^{-1/2}.
%  \end{aligned} 
%  \end{equation}
\end{enumerate}
%Then, the results stated in Corollary~\ref{co:discr-gen-uni}, under the conditions indicated therein, hold.
\end{definition}
%
%\begin{proof}
%For statement~\ref{it:gauss-dir}, equation~\eqref{eq:g1} results from the displacement interpolation between Gaussian distributions, as seen in~\cite{YC-TTG-AT:19}, and thus implements equation~\eqref{f1}. Case~\ref{it:gauss-sym} follows similarly. 
%\end{proof}

%\begin{remark}[Examples of measures for Theorem~\ref{th:cont-gen}]%that form compatible collections]
%We refer to~\cite[Section~2.3]{VMP-YZ:20} and~\cite{EB-TLG-JML:15} for more examples of measures that form compatible collections and their statistical applications.
%\end{remark}

We remark that the work~\cite{PCAE-EdB-JACA-CM:16} proposes a non-distributed iterative algorithm tailored to compute the Wasserstein barycenter of Gaussian distributions. However, to the best of our knowledge, the PaWBar algorithm is the first one proposing a distributed computation of randomized and standard Gaussian barycenters. % with closed-form iteration steps, as indicated in the following proposition.% (for the case where the initial measures form a compatible collection). 
%This is presented in the following proposition.
% 
%\pc{We remark} that the distance interpolation between two Gaussian distributions is a curve of Gaussian distributions.

\begin{remark}[Further characterizations %About the random convex coefficients 
of the randomized Wasserstein barycenter]
The results in~\cite{ATS-AJ:10} can be  
applied to characterize %the first two moments, i.e., 
the mean and covariance matrix associated to the random convex vector present in the randomized Wasserstein barycenter in Theorem~\ref{th:discr-gen}, Corollary~\ref{co:discr-uni}, Theorem~\ref{th:cont-gen}, and Corollary~\ref{co:discr-gen-uni}, which 
%This characterization 
numerically depends on the values of the time-invariant probabilities associated with the edge selection process.
\end{remark}

We propose the following %informal
conjecture.

\begin{conjecture}[Computation under more general absolutely continuous measures]
\label{conj:contdis}
The convergence results of the PaWBar algorithm in Theorem~\ref{th:cont-gen} also hold for general absolutely continuous measures. %, i.e., ones which do not necessarily form a compatible collection.
\end{conjecture}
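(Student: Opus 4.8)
Since the statement is a conjecture, I outline a plan of attack rather than a complete argument. The plan is to decouple the problem into two parts: first, prove that the PaWBar dynamics reaches \emph{consensus}, i.e., $W_2(\mu_i(t),\mu_j(t))\to 0$ for all $i,j\in V$ with probability one; and second, \emph{identify} the common limit $\mu_\infty$ as the Wasserstein barycenter with the random weights $\lambda$ produced by Proposition~\ref{prop:conv_A}. The role of the compatibility hypothesis in Theorem~\ref{th:cont-gen} is precisely to place $\{\mu_{i,0}\}_{i\in V}$ inside a totally geodesic, flat subset of $\PPd$ (on which optimal maps from a common reference $\gamma$ can be composed while remaining optimal), so that the displacement-interpolation updates \eqref{f1}--\eqref{f2} \emph{linearize} into the matrix recursion governed by $A(t)$; removing compatibility destroys this linearization, so a curvature-robust argument is needed.

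For the consensus step, I would introduce the disagreement functional $\Phi(t)=\sum_{\{i,j\}\in E}\pi_{ij}\,W_2(\mu_i(t),\mu_j(t))^2$, where the $\pi_{ij}$ are the edge-selection probabilities, and attempt to show that $\E[\Phi(t+1)\mid\mathcal{F}_t]\le\Phi(t)$ with a strict decrease bounded away from zero unless consensus is reached. When edge $\{i,j\}$ is chosen in the symmetric case, both agents jump to the geodesic midpoint, so the term $W_2(\mu_i,\mu_j)^2$ collapses to zero; the difficulty is to control the effect of this move on the distances to every third agent $k$. Here I would use the tangent-space description of $\PPd$ together with the geodesic structure recalled in Section~\ref{sec:prelim}, and a supermartingale convergence theorem in the spirit of the stochastic consensus results of \cite[Chapter~13]{FB:20} to pass from per-step decrease to almost-sure convergence. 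The directed case is handled analogously, with $\pi_{a_{ij}}$ replacing the midpoint and the averaging weight accounted for through $A(t)$.

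Having obtained a common limit $\mu_\infty$, for the identification step I would invoke the first-order optimality characterization of the barycenter for absolutely continuous measures \cite{MA-GC:11}: the absolutely continuous measure $\mu_\infty$ is the $\lambda$-barycenter of $\{\mu_{i,0}\}_{i\in V}$ if and only if $\sum_{j=1}^n\lambda_j\,T_{\mu_\infty}^{\mu_{j,0}}=I\!d$ holds $\mu_\infty$-almost everywhere. The strategy is to show that the tangent displacements $(T_{\mu_\infty}^{\mu_{j,0}}-I\!d)_{j}$, transported into the common tangent space $L^2(\mu_\infty;\R^d)$, are driven by the same stochastic matrices $A(t)$, so that their weighted sum is annihilated in the limit by the convergence $\prod_{\tau}A(\tau)=\vect{1}_n\lambda^\top$ of Proposition~\ref{prop:conv_A}. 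This is the natural generalization of the explicit limit $\mu_\infty=(\sum_{j}\lambda_j T_\gamma^{\mu_{j,0}})_\#\gamma$ valid in the compatible case.

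The main obstacle is the \emph{nonnegative curvature} of $\PPd$ for $d\ge 2$. In a flat space the midpoint move decreases $\Phi$ by a positive parallelogram defect, but in a space of nonnegative curvature the corresponding comparison inequality reverses, so the distances to a third agent may increase and $\Phi$ need not be a supermartingale. I therefore expect that a clean global statement will require either (a) a local analysis near $\mu_\infty$, where curvature enters only at second order and the flat computation is recovered as a perturbation---yielding at least the small-diameter version analogous to the ``$\max_{i,j}W_2(\mu_{i,0},\mu_{j,0})$ sufficiently small'' hypothesis of Theorem~\ref{th:discr-gen}---or (b) a quantitative control showing that the curvature-correction terms are summable along trajectories and vanish as consensus is approached, which would bootstrap to the global conclusion. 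The one-dimensional case is unconditional because $\PP(\R)$ embeds isometrically in $L^2$ and is flat, which is why Corollary~\ref{co:discr-gen-uni}, case~\ref{it:co-1}, needs no compatibility assumption.
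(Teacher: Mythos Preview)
The paper does not prove Conjecture~\ref{conj:contdis}; it is stated as an open problem and supported only by numerical experiments in the Gaussian case (Figures~\ref{fig:1-Gauss} and~\ref{fig:2-Gauss}). There is therefore no proof in the paper to compare against, and your submission should be read as a research outline rather than a proof.

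As an outline, the two-step decomposition (consensus, then identification) is natural, and you correctly single out the central obstruction: for $d\ge 2$ the Wasserstein space $\PPd$ has nonnegative Alexandrov curvature, so the comparison inequality one would use to make $\Phi$ a supermartingale points the wrong way. Since you name this obstacle without overcoming it, the proposal remains a plan, not an argument. Option~(a) would at best yield a small-initial-diameter statement, which is strictly weaker than the conjecture; option~(b) is a hope rather than a mechanism---you would need to bound the curvature defect along a trajectory by a quantity whose decay you are simultaneously trying to establish, and no such bootstrap is supplied.

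The identification step also has a gap. The vectors $T_{\mu_\infty}^{\mu_{j,0}}-I\!d$ are fixed elements of $L^2(\mu_\infty;\R^d)$ (the $\mu_{j,0}$ do not evolve), so it is unclear what it means for them to be ``driven by $A(t)$''. Presumably you intend to track $T_{\mu_\infty}^{\mu_i(t)}$, but without compatibility the update $\mu_i(t+1)=(\pi_{a_{ij}})_\#\gamopt(\mu_i(t),\mu_j(t))$ does \emph{not} translate into the convex combination $(1-a_{ij})T_{\mu_\infty}^{\mu_i(t)}+a_{ij}T_{\mu_\infty}^{\mu_j(t)}$ in the tangent space at $\mu_\infty$: the optimal map from $\mu_\infty$ to a point on the geodesic between $\mu_i(t)$ and $\mu_j(t)$ is in general not the average of the optimal maps to the endpoints. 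That linearization is exactly what compatibility buys in Theorem~\ref{th:cont-gen}, and its absence is the heart of the difficulty rather than a detail to be filled in later.
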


We provide some numerical evidence that Conjecture~\ref{conj:contdis} is true at least for the case where 
all agents initially have multivariate Gaussian distributions that do not form a compatible collection. 
%Indeed, for the directed PaWBar algorithm in the Gaussian case, there is convergence towards the Wasserstein barycenter when the initial measures do not form a compatible collection. 
In the numerical evidence presented in Figure~\ref{fig:1-Gauss} and Figure~\ref{fig:2-Gauss}, we use the update 
rules from  
%s presented in 
Definition~\ref{def:gauss} and %we 
only focus on the evolution of the agents' covariance matrices (the mean vectors evolve linearly and are easy to verify they converge to the mean of the Wasserstein barycenter). 
%
%In this Gaussian setting, we also performed similar simulations to the ones in both figures but using the symmetric PaWBar algorithm with connected graphs; and we obtained convergence to the standard Wasserstein barycenter.
%
Although we only present results for the directed PaWBar algorithm, similar results supporting our
conjecture were obtained for the symmetric case too.
%
% (data not shown).%  that give evidence of the veracity of Conjecture~\ref{conj:contdis} for the symmetric PaWBar algorithm with connected graphs.

\begin{figure}[ht]%[H]%[htp]
  \centering
  \subfloat{\label{f:1a}\includegraphics[width=0.25\linewidth,valign=c]{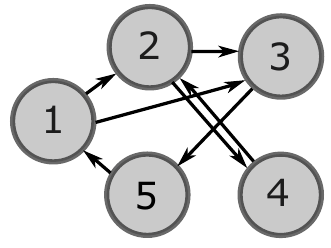}}  
    \hfil
  \subfloat{\label{f:1b}\includegraphics[width=0.7\linewidth,valign=c]{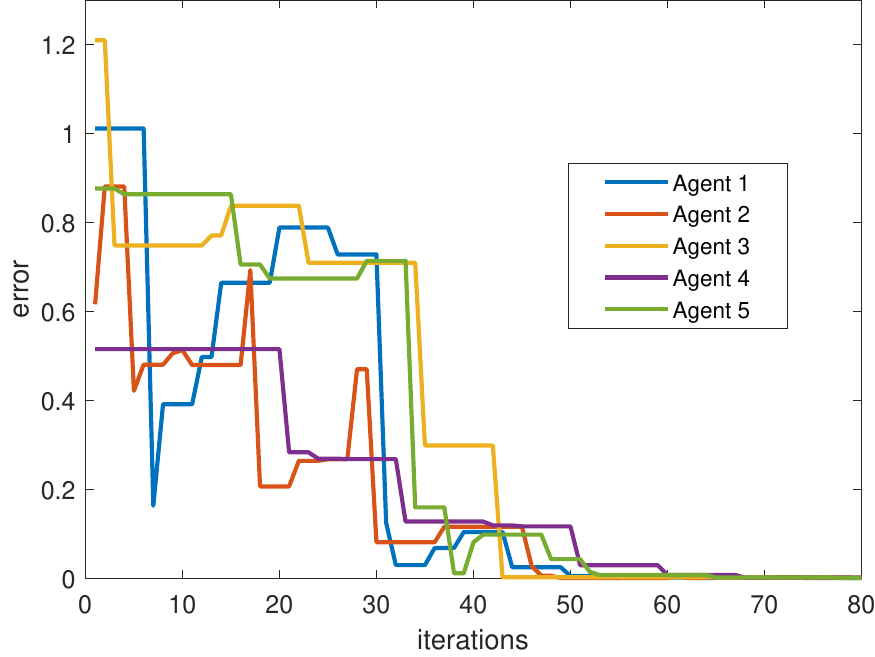}}
\caption{Consider five agents that initially have multivariate Gaussian distributions on $\R^5$, with their covariance matrices being randomly generated. On the left, we present the underlying digraph over which the PaWBar algorithm is run. The weight associated to all edges is $0.75$. We first fix a realization of the edge selection process. 
%by fixing the seed of the random number generator in our scientific software. 
Then, 
we compute the covariance matrix $\Sigma_{\infty}$ of the Wasserstein barycenter
% that would be obtained with the weights corresponding to the entries of any row of the product of evolution random matrices after a long period of time. 
%The numerical computation of $\Sigma_{\infty}$ follows the 
following the centralized numerical 
scheme proposed in~\cite{PCAE-EdB-JACA-CM:16}. On the right, each of the five plotted curves corresponds to the evolution of the error quantity $\norm{\Sigma_i(t)-\Sigma_{\infty}}_{F}$ for each agent $i\in\until{5}$, where $\Sigma_i(t)$ is the value of agent $i$'s  covariance matrix at iteration $t$, and $\norm{\cdot}_{F}$ is the Frobenius norm. All agents asymptotically reach consensus and 
%their covariance matrices 
%become the covariance matrix of 
converge to 
the randomized  Wasserstein barycenter, thus giving evidence for the veracity of Conjecture~\ref{conj:contdis} at least for the Gaussian case.
}\label{fig:1-Gauss}
\end{figure} 

\begin{figure}[ht]%[H]%[htp]
  \centering
  \subfloat{\label{f:2a}\includegraphics[width=0.25\linewidth,valign=c]{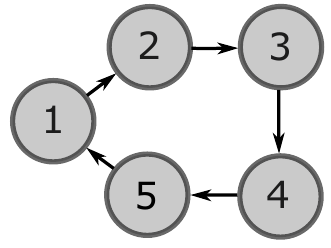}}
    \hfil
  \subfloat{\label{f:2b}\includegraphics[width=0.7\linewidth,valign=c]{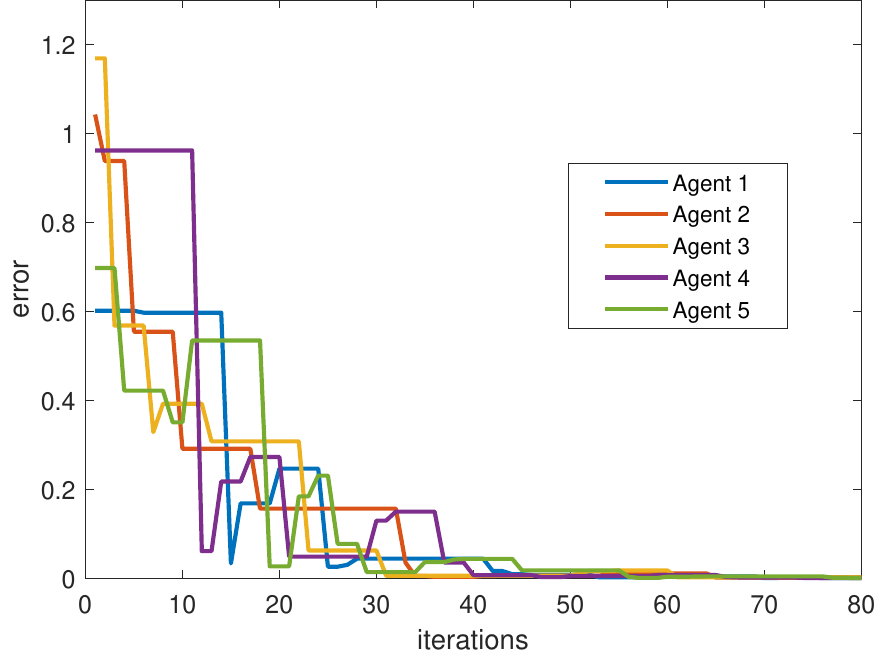}}
\caption{Consider five agents that initially have multivariate Gaussian distributions on $\R^5$, with their covariance matrices being randomly generated. The setting and methodology for computing the plot on the right is similar to the one described in Figure~\ref{fig:1-Gauss}, with the difference that now the underlying digraph is a cycle (as seen on the left). 
All agents asymptotically 
%reach consensus and 
%their covariance matrices become the covariance matrix of the 
compute the 
randomized Wasserstein barycenter.
}\label{fig:2-Gauss}
\end{figure} 

\subsection{Analysis of general measures}
\label{sec:gen_measures}

So far, we presented convergence results to a Wasserstein barycenter for classes of discrete (Theorem~\ref{th:discr-gen} and Corollary~\ref{co:discr-uni}) and absolutely continuous (Theorem~\ref{th:cont-gen} and Corollary~\ref{co:discr-gen-uni}) measures. 
In these cases an optimal transport map exists between any two agents' measures at every time. 
Now we analyze %the PaWBar algorithm on 
the case for 
general measures in $\PPd$, which includes cases where there may not exist an optimal transport map between two initial measures or where there could exist a mix of discrete and absolutely continuous initial measures.
%
%We prove that agents converge to consensus with probability one in 
%more restricted graph topologies, 
%and that the consensus measure satisfies a necessary condition known to characterize barycenters for certain classes of measures. 
%
%We present a sketch of Theorem~\ref{th:gen-conv}'s proof in Section~\ref{sec:proof}; the full proof can be found in the preprint~\cite{PCV-FB:21}.

\begin{theorem}[Consensus result for general measures in $\PPd$]
\label{th:gen-conv}
Consider the PaWBar algorithm with an underlying graph $G$ which is either
\begin{enumerate}
\item\label{it:gen-1} a cycle graph for the directed case,  or
\item\label{it:gen-2} a line graph for the symmetric case; 
\end{enumerate}
and with the agents having initial measures $\mu_{i,0}\in\PPd$, $i\in V$. Then, with probability one, for any $i\in V$, 
\begin{equation}
\label{eq:conv-gen}
W_2(\mu_i(t),\mu_{\infty})\to 0\text{ as }t\to\infty,
\end{equation}
where $\mu_{\infty}\in\PPd$ is a random measure whose possible values may depend on the realization of the edge selection process. If $\mu_{i,0}=\mu_{j,0}$ for any $i,j\in V$, then $\mu_{\infty}=\mu_{i,0}$ with probability one.
%
%Moreover, for either the directed or symmetric case, and with probability one,
%\begin{multline}
%\label{eq:supps}
%\supp(\mu_\infty)\subseteq\operatorname{cl}\left\{\sum^n_{i=1}\lambda_i x_i\;\bigg|\; x_i\in\supp(\mu_{i,0}),\right.\\
%\left.\lambda_i\geq 0, i\in V,\text{ and }\sum^n_{i=1}\lambda_i=1\right\},
%\end{multline}
%with $\operatorname{cl}$ indicating the closure operation for sets.
\end{theorem}

\begin{remark}[Open problem: Theorem~\ref{th:gen-conv} and Wasserstein barycenters]
\label{re:1}
%Equation~\eqref{eq:supps} is known to be satisfied 
%%a necessary condition for $\mu_\infty$ to be a Wasserstein barycenter 
%when $\mu_\infty$ is a Wasserstein barycenter of measures that are absolutely continuous with respect to the Lebesgue measure, or discrete with finite support~\cite{MA-GC:11,EA-SB-JM:16}. 
%However, 
%
The characterization of the consensus value in our theorem  
does not state any sufficient condition under which the converged consensus random measure is a Wasserstein barycenter or not: 
this is an open problem for further research.
\end{remark}

For the particular case of general measures in $\PP(\R)$, consensus is guaranteed under the same general conditions for the underlying communication graph as in the previous results of our paper. 
This follows
%This is shown in the following result following 
from a direct application of~\cite[Theorem~1]{ANB-AD:21}.

\begin{theorem}[Consensus result for general measures in $\PP(\R)$]
\label{th:gen-conva}
Consider the agents having initial measures $\mu_{i,0}\in\PP(\R)$, $i\in V$. Then, for the directed, resp. symmetric, PaWBar algorithm under a strongly connected digraph, resp. connected undirected graph,
\begin{equation}
\label{eq:conv-gena}
W_2(\mu_i(t),\mu_{\infty})\to 0\text{ as }t\to\infty,
\end{equation}
where $\mu_{\infty}\in\PP(\R)$ is a random measure whose possible values may depend on the realization of the edge selection process.
\end{theorem}

\section{Proofs of results in Section~\ref{sec:prop-alg}}
\label{sec:proof}

\subsection{Proofs of results in Subsection~\ref{sub:discr}}

\begin{proof}[Proof sketch of Theorem~\ref{th:discr-gen}]
Since measures $\mu_{i,0}$ and $\mu_{j,0}$, $i,j\in V$, are discrete uniform, we have \\
$W_2^2(\mu_{i,0},\mu_{j,0})=\min_{\sigma\in\Sigma_N}\frac{1}{N}\sum_{k=0}^N\norm{x^i_k-x^j_{\sigma(k)}}_2^2$~\cite{CV:03,JR-GP-JD-MB:12}, with $\Sigma_N$ being the set of all possible permutations of the elements in $\until{N}$; i.e., any permutation map $\sigma\in\Sigma_N$ is a bijective function $\sigma:\until{N}\to\until{N}$. 
Then, the displacement interpolation between any of the initial measures provides discrete measures. 
Now, consider $\sigma\in\Sigma_N$ from solving the optimal transport problem from $\mu_{i,0}$ to $\mu_{j,0}$, i.e., $\gamopt(\mu_{i,0},\mu_{j,0})=\frac{1}{N}\sum_{k=0}^N\delta_{(x^i_k,x^j_{\sigma(k)})}$. 
Consider two arbitrary points $(x^i_{k_1},x^j_{\sigma(k_1)}),(x^i_{k_2},x^j_{\sigma(k_2)})\in\supp(\gamopt(\mu_{i,0},\mu_{j,0}))$. Then, for any $a_{ij}\in(0,1)$, the displacement interpolation implies the existence of some $z_{k_1}(a_{ij}),z_{k_2}(a_{ij})\in\supp((\pi_{a_{ij}})_{\#}\mu_{i,0})$, such that $z_{k_1}(a_{ij})=(1-a_{ij})x^i_{k_1}+a_{ij}x^j_{\sigma(k_1)}$ and $z_{k_2}(a_{ij})=(1-a_{ij})x^i_{k_2}+a_{ij}x^j_{\sigma(k_2)}$. 
Now, since the optimal transport plan $\gamopt(\mu_{i,0},\mu_{j,0})$ 
%the support of $\gamopt(\mu_{i,0},\mu_{j,0})=\frac{1}{N}\sum_{k=0}^N\delta_{(x^i_k,y^j_{\sigma(k)})}$, 
has cyclically monotone support~\cite[Section~2.3]{CV:03}, we can follow the treatment in~\cite[Chapter~8]{CV:09} and conclude that 
%(since the Wasserstein distance is an optimal transport problem with quadratic cost) 
there exists no $a_{ij}\in(0,1)$ such that $z^{k_1}(a_{ij})=z^{k_2}(a_{ij})$. As a consequence, $\supp((\pi_{a_{ij}})_{\#}\mu_{i,0})$ has $N$ (different) elements for any possible edge weight $a_{ij}$, i.e., $(\pi_{a_{ij}})_{\#}\mu_{i,0}$ is a discrete uniform measure. 
It is easy to show by induction that in either the directed or symmetric PaWBar algorithm, $\mu_i(t)$ is a discrete uniform distribution for any $i\in V$ and time $t$ with probability one.

We now introduce some notation. Given $A\in\R^{m\times m}$, let $\diag^{i,k}(A)\in\R^{km\times km}$ be the $k\times k$ block-diagonal matrix such that its $i$th block has the matrix $A$ and the rest of blocks are $I_m$. %, the exact dimensions of the matrix will depend on the context. Given $$
Given $A,B\in\R^{m\times m}$, let $\diag^{ij,k}(A,B)\in\R^{km\times km}$ be the $k\times k$ block-diagonal matrix such that its $i$th and $j$th blocks are the matrices $A$ and $B$ respectively, and the rest of blocks are $I_m$.
Let $\bx^i(t)\in\R^{Nd}$ be a vector stacking the elements of $\supp(\mu_i(t))$, which we call the \emph{support vector}. Note that since the measures are discrete uniform at every time $t$ (with probability one), the order of the elements $x^i_{k}(t)\in\R^{d}$, $k\in\until{N}$, in the vector $\bx^i(t)$ can be arbitrary; but for convenience we denote it as $\bx^i(t)=(x^i_{1}(t),\cdots,x^i_{N}(t))^\top$. 
For any $i,j\in V$ and time $t$, let $\sigma_{ij,t}\in\Sigma_N$ be an optimal transport map from $\mu_i(t)$ to $\mu_j(t)$; and let $\sigma_{ii,t}(k)=k$ and $\sigma_{ji,t}=\sigma_{ij,t}^{-1}$.

We now focus on proving statement~\ref{it-dis-gen-dir}. Assume $(i,j)\in E$ is selected at time $t$. Then, 
$x^i_{k}(t+1)=(1-a_{ij})x^i_{k}(t)+a_{ij}x^j_{\sigma_{ij,t}(k)}(t)$, 
%\end{equation}
$k\in\until{N}$, i.e., 
\begin{equation}
\label{eq:norm-updt}
\bx^i(t+1)=(1-a_{ij})\bx^i(t)+a_{ij}(P(t)\otimes I_d)\bx^j(t)
\end{equation}
with the permutation matrix $P(t)\otimes I_d$ defined by the permutation
matrix $P(t)\in\{0,1\}^{N\times N}$ whose $k$th row is
$e_{\sigma_{ij,t}(k)}^\top$. Indeed, with $Q_{ij,t}=P(t)\otimes I_d$,
\begin{equation*}
W_2^2(\mu_i(t),\mu_j(t))=\frac{1}{N}\norm{\bx^i(t)-Q_{ij,t}\bx^j(t)}_2^2.
\end{equation*}
Now, set $\bx(t)=(\bx^1(t),\dots,\bx^n(t))^\top\in\R^{nNd}$. Then,~\eqref{eq:norm-updt} can also be expressed as%becomes 
%for some permutation matrix $P(t)\in\{0,1\}^{N\times N}$. 
\begin{equation}
\label{eq:norm-updt-tod}
\bx(t+1)=B(t)\bx(t)
\end{equation}
with the row-stochastic matrix $B(t)=\diag^{j,n}(P(t)^\top\otimes I_d)(A(t)\otimes I_{Nd})\diag^{j,n}(P(t)\otimes I_d)$. 

Consider now an initial vector $\bx(0)$ and a 
fixed realization of the edge selection process. 
Now consider $\bx'(0)=\diag(P_1\otimes I_d,\dots,P_n\otimes I_d)\bx(0)$ with arbitrary permutation matrices $P_1,\dots,P_n\in\{0,1\}^{N\times N}$. Notice that both $\bx(0)$ and $\bx'(0)$ represent the supports of the same group of measures $\{\mu_{i,0}\}_{i\in V}$ but may be the case that $\bx(0)\neq\bx'(0)$. We claim that 
\begin{equation}
\label{eq:cl-1}
%\lim_{t\to\infty}\bx'(t)=\diag(P_1\otimes I_d,\dots,P_n\otimes I_d)\by^*.
\bx'(t)=\diag(P_1\otimes I_d,\dots,P_n\otimes I_d)\bx(t)\text{ for any time }t.
\end{equation}
To prove this claim, first recall that we have a fixed realization of the edge selection process. Assume $(i,j)\in E$ is selected at time $t=0$ and obtain $\bx(1)=B(0)\bx(0)$. Likewise, $\bx'(1)=B'(0)\bx'(0)$, with $B'(0)=\diag^{j,n}(P'(0)^\top\otimes I_d)(A(0)\otimes I_{Nd})\diag^{j,n}(P'(0)\otimes I_d)$, is the update that results if the algorithm starts with initial vector $\bx'(0)$. Then,
$$
P'(0)\otimes I_d=(P_i\otimes I_d)(P(0)\otimes I_d)(P_j^\top\otimes I_d)=(P_iP(0)P_j^\top)\otimes I_d.
$$ 
After some algebraic work, we obtain $B'(0)=\diag^{ij,n}(P_i,P_j)B(0)\diag^{ij,n}(P_i^\top,P_j^\top)$,
and so %Then, %with some purely algebraic work and using the definition of $B(0)$,
\begin{align*}
\bx'(1)&=\diag^{ij,n}(P_i,P_j)B(0)\diag^{ij,n}(P_i^\top,P_j^\top)\bx'(0)\\
&=\diag^{ij,n}(P_i,P_j)B(0)\diag^{ij,n}(P_i^\top,P_j^\top)\diag(P_1,\dots,P_n)\bx(0)\\
&=\diag(P_1,\dots,P_n)B(0)\bx(0)=\diag(P_1,\dots,P_n)\bx(1).
\end{align*}
Finally~\eqref{eq:cl-1} is easily proved by induction, and the claim is proved.

%We can first obtain the following result:
Now, let us make the following claim:
\begin{enumerate}[label=(i.\alph*)]
\item\label{aux:claim_3} for any $i^*,j^*\in V$, $i^*\neq j^*$, $\epsilon>0$ and time $t$, the event 
``$W_2(\mu_{i^*}(t+T),\mu_{j^*}(t+T))<\epsilon$ 
for some finite $T>0$" has positive probability. 
\setcounter{saveenum}{\value{enumi}}
\end{enumerate}
We prove the claim. Define
$d(i,j,\sigma^i,\sigma^j,t_i,t_j):=\Big(\sum^N_{k=1}\frac{1}{N}
\bignorm{x^i_{\sigma^i(k)}(t_i)-x^j_{\sigma^j(k)}(t_j)}^2_2\Big)^{\frac{1}{2}}$,
for $i,j\in V$, $\sigma^i,\sigma^j\in\Sigma_N$.
Consider any $i^*,j^*\in V$ and $\epsilon>0$. 
Since $G$ is strongly connected, there exists a shortest directed path $\Pa_{i^*\to j^*}=((i^*,\ell_{1}),\dots, (\ell_{L-1},j^*))$ from $i^*$ to $j^*$ of some length $L$. 
Now, pick positive numbers $\epsilon_1,\dots,\epsilon_L$ such that $\sum^L_{i=1}\epsilon_i<\epsilon$. Consider any time $t$. Then, we can first select $T_1$ times the edge $(\ell_{L-1},j^*)$ so that 
\begin{equation*}
%\begin{multline*}
  d(\ell_{L-1},j^*,\sigma^{-1}_{\ell_{L-1}j^*,t+T_1},I\!d,t+T_1,t)=(1-a_{\ell_{L-1}j^*})^{T_1}d(\ell_{L-1},j^*,\sigma^{-1}_{\ell_{L-1}j^*,t},I\!d,t,t)<\epsilon_{L}.
\end{equation*}
Then, we can select $T_2$ times the edge $(\ell_{L-2},\ell_{L-1})$ so that 
\begin{equation*}
  d(\ell_{L-2},\ell_{L-1},\sigma^{-1}_{\ell_{L-2}\ell_{L-1},t+T_1+T_2}\circ\sigma^{-1}_{\ell_{L-1}j^*,t+T_1},\sigma^{-1}_{\ell_{L-1}j^*,t+T_1},t+T_1+T_2,t+T_1)<\epsilon_{L-1},
\end{equation*}
%$d(\ell_{L-2},\ell_{L-1},\sigma^{-1}_{\ell_{L-2}\ell_{L-1},t+T_1+T_2}\circ\sigma^{-1}_{\ell_{L-1}j^*,t%+T_1},\sigma^{-1}_{\ell_{L-1}j^*,t+T_1},t+T_1+T_2,t+T_1)<\epsilon_{L-1}$,
and we can continue like this until finally selecting $T_{L}$ times the
edge $(i^*,\ell_{1})$ such that \\
$d(i^*,\ell_{1},\sigma,\sigma^{-1}_{\ell_1\ell_2,t+\sum^{L-1}_{i=0}
  T_i}\circ\dots\circ\sigma^{-1}_{\ell_{L-1}j^*,t},t+T,t+\sum^{L-1}_{i=1}
T_i)<\epsilon_{1}$ with
$\sigma=\sigma^{-1}_{i^*\ell_1,t+T}\circ\dots\circ\sigma^{-1}_{\ell_{L-1}j^*,t}$
and $T=\sum^{L}_{i=1} T_i$.  Then,
\begin{align*}
%\begin{multline*}
W_2(\mu_{i^*}(t+T),\mu_{j^*}(t+T))\leq(\sum_{k=1}^N\frac{1}{N}\norm{x^{i^*}_{\sigma(k)}(t+T)-x^{j^*}_{k}(t+T)}_2^2)^{\frac{1}{2}}<\sum_{i=1}^{L} \epsilon_i< \epsilon,
\end{align*}
%\end{multline*}
where the first inequality follows by definition of the Wasserstein distance, and the second inequality from both the triangle inequality and the fact that $x^{j^*}_{k}(t+T)=x^{j^*}_{k}(t)$, $x^{\ell_{L-1}}_k(t+T)=x^{\ell_{L-1}}_k(t+T_1)$, $\dots$, $x^{\ell_1}_{k}(t+T)=x^{\ell_1}_{k}(t+\sum^{L-1}_{i=1}T_i)$. Moreover, our construction implies, for any $p\in V$ in the path $\Pa_{i^*\to j^*}$,
\begin{equation}
\label{eq:path}
W_2(\mu_{p}(t+T),\mu_{j^*}(t+T))<\epsilon.
\end{equation}
Now, consider any $m\in V$ and construct a directed acyclic subgraph $G'=(V,E')$, $E'\subset E$, of $G$ as follows: $m$ is the unique node with zero out-degree (i.e, $(m,i)\notin E'$ for any $i\in V$) and there exists a unique directed path from any node $i\in V\setminus\{m\}$ to $m$. Such subgraph $G'$ exists because $G$ is strongly connected. Consider any $\epsilon>0$ and time $t$. Then, the selection process just described above can make all nodes $\bar{m}$ with zero in-degree in $G'$ (i.e., any $\bar{m}\in V$ such that $(i,\bar{m})\notin E'$ for any $i\in V\setminus\{\bar{m}\}$) 
satisfy $W_2(\mu_{\bar{m}}(t+T),\mu_m(t+T))=W_2(\mu_{\bar{m}}(t+T),\mu_m(t))<\frac{\epsilon}{2}$ for some $T$. 
Then, as a consequence of~\eqref{eq:path}, $W_2(\mu_i(t+T),\mu_j(t+T))<\frac{\epsilon}{2}$ for any $i\in V$, and the triangle inequality then implies $W_2(\mu_i(t+T),\mu_j(t+T))<\epsilon$ for any $j\in V$. 
Finally, for any $i,j\in V$, the event ``$W_2(\mu_i(t+T),\mu_j(t+T))<\epsilon$ for some $T>0$" has a positive probability to occur at any time $t$ because any selection of a finite sequence of edges has positive probability to occur at any time $t$. This finishes the proof of claim~\ref{aux:claim_3}.

Now, the event in~\ref{aux:claim_3}, due to its persistence, will eventually happen with probability one. Assume it happens at time $t$. 
Then we claim that $\epsilon$ in this event in this event could have been chosen sufficiently small so that, for any time $t'\geq t$ and any $i,j,p\in V$,
\begin{enumerate}[label=(i.\alph*)]\setcounter{enumi}{\value{saveenum}}
\item\label{aux:claim_otr_1} $\sigma_{ij,t'}=\sigma_{ij,t}$, and
\item\label{aux:claim_otr_22} $\sigma_{ij,t'}=\sigma_{ip,t'}\circ\sigma_{pj,t'}$.
\end{enumerate}
%%Note that~\ref{aux:claim_otr_1} and~\ref{aux:claim_otr_22} imply that the optimal transport map from $\mu_i(t')$ to $\mu_j(t')$ for any $i,j\in V$ remains constant for any $t'\geq t$.
%%%
%%
Now we prove the claim. Firstly, note that from~\ref{aux:claim_3} and the fact that the measures are discrete uniform at every time with probability one, we can consider a small enough $\epsilon$ such that for any $i,j\in V$ and any permutation map $\sigma\neq \sigma_{ij,t}$,
%$i,j\in V$, $\norm{\bx^i(t)-Q_{ij,t}\bx^j(t)}_2<\epsilon$ and $2\epsilon<(\sum_{k=1}^N\norm{x_k^i(t)-x_{\sigma(k)}^j(t)}_2^2)^{\frac{1}{2}}$ 
\begin{equation}
\label{eq:cond-clos}
\begin{split}
&W_2(\mu_i(t),\mu_j(t))
=\Big(\frac{1}{N}\sum_{k=1}^N\norm{x_k^i(t)-x_{\sigma_{ij,t}(k)}^j(t)}_2^2\Big)^{\frac{1}{2}}<\epsilon
%
%=\frac{1}{\sqrt{N}}\norm{\bx^i(t)-Q_{ij,t}\bx^j(t)}_2<\epsilon
%
\quad\text{ and }\\
&2\epsilon<\Big(\frac{1}{N}\sum_{k=1}^N\norm{x_k^i(t)-x_{\sigma(k)}^j(t)}_2^2\Big)^{\frac{1}{2}}.
\end{split}
\end{equation} 
Such choice of $\epsilon$ implies that $\sigma_{ip,t}\circ\sigma_{pj,t}=\sigma_{ij,t}$ for any $i,j,p\in V$; otherwise, if $\sigma_{ip,t}\circ\sigma_{pj,t}\neq\sigma_{ij,t}$, then we obtain a contradiction:
\begin{align*}
2\epsilon&<\Big(\frac{1}{N}\sum_{k=1}^N\norm{x_k^i(t)-x_{\sigma_{ip,t}\circ\sigma_{pj,t}(k)}^j(t)}_2^2\Big)^{\frac{1}{2}}=\Big(\frac{1}{N}\sum_{k=1}^N\norm{x^i_{\sigma_{pi,t}(k)}(t)-x_{\sigma_{pj,t}(k)}^j(t)}_2^2\Big)^{\frac{1}{2}}\\
&\leq\Big(\frac{1}{N}\sum_{k=1}^N\norm{x^i_{\sigma_{pi,t}(k)}(t)-x_k^p(t)}_2^2\Big)^{\frac{1}{2}}+\Big(\frac{1}{N}\sum_{k=1}^N\norm{x^p_k(t)-x_{\sigma_{pj,t}(k)}^j(t)}_2^2\Big)^{\frac{1}{2}}<2\epsilon.
\end{align*}
%which is a contradiction. 
We just proved that~\ref{aux:claim_otr_22} holds for $t'=t$. Note that~\ref{aux:claim_otr_1} for $t'=t$ is trivial.
Now, assume any $(i^*,j^*)\in E$ is selected at time $t$. Then, for any $j\in V\setminus\{i^*,j^*\}$, using the identity $Q_{j^*j,t}=Q_{j^*i^*,t}Q_{i^*j,t}$ from~\ref{aux:claim_otr_22} for $t'=t$ implies
\begin{align}
 & \norm{\bx^{i^*}(t+1)-Q_{i^*j,t}\bx^j(t+1)}_2\\
  & \quad \leq(1-a_{i^*j^*})\norm{\bx^{i^*}(t)-Q_{i^*j,t}\bx^j(t)}_2
  +a_{i^*j^*}\norm{Q_{i^*j^*,t}\bx^{j^*}(t)-Q_{i^*j,t}\bx^j(t)}_2 \nonumber \\
  & \quad =(1-a_{i^*j^*})\norm{\bx^{i^*}(t)-Q_{i^*j,t}\bx^j(t)}_2+a_{i^*j^*}\norm{\bx^{j^*}(t)-Q_{j^*j,t}\bx^j(t)}_2 \nonumber \\
  &\quad <(1-a_{i^*j^*})\epsilon\sqrt{N}+a_{i^*j^*}\epsilon\sqrt{N}
  =\epsilon\sqrt{N}; \nonumber
\end{align}
likewise, we immediately obtain $\frac{1}{\sqrt{N}}\norm{\bx^i(t+1)-Q_{ij,t}\bx^j(t+1)}_2<\epsilon$ for any $i\in V\setminus\{i^*,j\}$, and $\frac{1}{\sqrt{N}}\norm{\bx^{i^*}(t+1)-Q_{i^*j^*,t}\bx^{j^*}(t+1)}_2<(1-a_{i^*j^*})\epsilon<\epsilon$. In summary, \\$\frac{1}{\sqrt{N}}\norm{\bx^{i}(t+1)-Q_{ij,t}\bx^{j}(t+1)}_2<\epsilon$ for any $i,j\in V$, which implies $\sigma_{ij,t+1}=\sigma_{ij,t}$ for any $i,j\in V$; i.e.,~\ref{aux:claim_otr_1} holds for $t'=t+1$. Now, to prove claim~\ref{aux:claim_otr_22} holds for $t'=t+1$, we must first prove that~\eqref{eq:cond-clos} holds for time $t+1$. 

Set $\by^1(t):=\bx^1(t),\by^2(t)=Q_{12,t}\bx^2(t),\dots,\by^n(t)=Q_{1n,t}\bx^n(t)$ (this labeling is arbitrary and 
any other $i\in V\setminus\{1\}$ could have been chosen to define $Q_{i1},\dots,Q_{in}$)
%could have been done with respect to any agent other than $i$)
and $\by^i(t)=(y^i_1(t),\dots,y^i_N(t))^\top$, $y^i_1(t)\in\R^d$, $i\in\until{n}$. For any $k\in\until{N}$, let $\mL_k(t)$ be the convex hull of the set $\{y^1_k(t),\dots,y^n_k(t)\}$. For any $p,q\in\until{N}$, define the distance between $\mL_p(t)$ and $\mL_q(t)$ as $d_{pq}(t)=\inf_{w_1\in\mL_p(t),w_2\in\mL_q(t)}\norm{w_1-w_2}_2$. 
Assuming that $(i^*,j^*)\in E$ is selected at time $t$, our result~\ref{aux:claim_otr_1} for $t'=t+1$ and~\eqref{eq:norm-updt} imply that $y^{i^*}_k(t+1)=(1-a_{i^*j^*})y^{i^*}_k(t)+a_{i^*j^*}y^{j^*}_k\in\mL_k(t)$, $k\in\until{N}$. Obviously, for any $j\in V\setminus\{i^*\}$, $y^j_k(t+1)=y^j_k(t)\in\mL_k(t)$, $k\in\until{N}$. Then, $\mL_i(t+1)\subseteq\mL_i(t)$ for any $i\in\until{N}$, and thus $d_{pq}(t)\leq d_{pq}(t+1)$ for any $p,q\in\until{N}$. 
Now, for any $i,j\in V$, time $\tau\geq t$, and permutation map $\sigma\neq I\!d$, we have $d_{k\sigma(k)}(\tau)\leq \norm{y^i_k(\tau)-y^j_{\sigma(k)}(\tau)}_2\Longrightarrow (\sum_{k=1}^Nd^2_{k\sigma(k)}(\tau))^{\frac{1}{2}}\leq (\sum_{k=1}^N\norm{y^i_k(\tau)-y^j_{\sigma(k)}(\tau)}_2^2)^{\frac{1}{2}}$. Now, we need to consider two cases. In the first case we consider \\$2\epsilon<\min_{\bar{\sigma}\in\Sigma_N,\bar{\sigma}\neq I\!d}(\frac{1}{N}\sum_{k=1}^Nd^2_{k\bar{\sigma}(k)}(t))^{\frac{1}{2}}$. Then, $$2\epsilon<(\frac{1}{N}\sum_{k=1}^Nd^2_{k\sigma(k)}(t+1))^{\frac{1}{2}}\leq(\frac{1}{N}\sum_{k=1}^N\norm{y_k^i(t+1)-y_{\sigma(k)}^j(t+1)}_2^2)^{\frac{1}{2}}$$ for any permutation map $\sigma\neq I\!d$; and~\ref{aux:claim_otr_1} for $t'=t+1$ and~\ref{aux:claim_otr_22} for $t'=t$ imply $2\epsilon<(\frac{1}{N}\sum_{k=1}^N\norm{x_k^i(t+1)-x_{\sigma'(k)}^j(t+1)}_2^2)^{\frac{1}{2}}$ with $\sigma'=\sigma_{i1,t+1}\circ\sigma\circ\sigma_{1j,t+1}\neq \sigma_{ij,t+1}$. Thus,~\eqref{eq:cond-clos} holds for time $t+1$ in this first case.  
Now, we consider the second case $2\epsilon\geq\min_{\bar{\sigma}\in\Sigma_N,\bar{\sigma}\neq I\!d}(\frac{1}{N}\sum_{k=1}^Nd^2_{k\bar{\sigma}(k)}(t))^{\frac{1}{2}}$. Then, due to $G$ being strongly connected and $\{d_{pq}(\tau)\}_{\tau\geq t}$ being a nondecreasing sequence for any $p,q\in\until{N}$, we can follow the proof of result~\ref{aux:claim_3} and arbitrarily reduce the diameter of the set $\mL_k$ for any $k\in\until{N}$ at some future time $\bar{t}$, i.e., $\mL(\bar{t})\subset\mL(t)$. This diameter reduction can be chosen such that $d_{ij}(\bar{t})>d_{ij}(t)$ for any $i,j\in\until{N}$, and this increase on the distances between sets 
can be done so that 
$2\epsilon'<\min_{\bar{\sigma}\in\Sigma_N,\bar{\sigma}\neq I\!d}(\frac{1}{N}\sum_{k=1}^Nd^2_{k\bar{\sigma}(k)}(\bar{t}))^{\frac{1}{2}}$ for some $0<\epsilon'<\epsilon$. In other words, we are in the first case at time $\bar{t}$. After this change, we will never be in the second case again for any time after $\bar{t}$ with probability one. In summary, we just proved the conditions in equation~\eqref{eq:cond-clos} can be made to hold for time $t+1$, and so \ref{aux:claim_otr_22} holds for $t'=t+1$.

Now, assume results~\ref{aux:claim_otr_1} and~\ref{aux:claim_otr_22} hold for time $t'=\tau\geq t$, and~\eqref{eq:cond-clos} holds for time $\tau$. Following the proof just presented above, we easily establish that~\ref{aux:claim_otr_1} and~\ref{aux:claim_otr_22} hold for $t'=\tau+1$ and that~\eqref{eq:cond-clos} holds for time $\tau+1$. Then, by induction, we proved our initial claim about~\ref{aux:claim_otr_1} and~\ref{aux:claim_otr_22}.

We now, notice that $\epsilon$ in $\max_{i,j\in V}W_2(\mu_{i,0},\mu_{j,0})<\epsilon$ can be made sufficiently small so that~\ref{aux:claim_otr_1} and~\ref{aux:claim_otr_22} hold, in which case~\ref{aux:claim_3} is satisfied at the beginning of time, i.e., with $t=T=0$. Therefore, in general, from results~\ref{aux:claim_3},~\ref{aux:claim_otr_1} and~\ref{aux:claim_otr_22}, there exists some (possibly) random time $\bar{T}\geq 0$ such that, with probability one: for any time $t\geq \bar{T}$ and any $i,j,p\in V$, $\sigma_{ij,t}=\sigma_{ij,\bar{T}}$ and $\sigma_{ij,t}=\sigma_{ip,t}\circ\sigma_{pj,t}$.
Let us consider a fixed realization of the edge selection process, and then consider such time $\bar{T}$, which is now a deterministic function of $\bx(0)$.% since we have a fixed realization of the edge selection process. 
Without loss of generality, as a consequence of~\eqref{eq:cl-1}, we can assume we started the algorithm with the initial support vectors $\{Q_{1i,\bar{T}}\bx^i(0)\}_{i\in V}$ at time $t=0$. Then, it is easy to prove that $B(t)=A(t)\otimes I_{Nd}$ (as in~\eqref{eq:norm-updt-tod}) for any $t\geq \bar{T}$, i.e., $B(t)$ has an associated permutation matrix $P(t)=I_N$. 
Then, Proposition~\ref{prop:conv_A} let us conclude that 
$\lim_{t\to\infty}\prod^{t}_{\tau=\bar{T}}B(\tau)=(\vect{1}_n\lambda^\top)\otimes I_{Nd}$ for some convex vector $\lambda$. Thus, $\bx^i(\infty)=\sum^n_{j=1}\lambda_j\bx^j(\bar{T})$, $i\in V$, which is the support vector of the final consensus measure $\mu_\infty$.

It remains to prove that $\mu_{\infty}$ corresponds to a Wasserstein
barycenter. Let us formulate the Wasserstein barycenter problem
$\min_{\nu\in\PPd}\sum_{i=1}^n\lambda_iW_2(\nu,\mu_i(\bar{T}))^2$. Since the
measures $\{\mu_i(\bar{T})\}_{i\in V}$ have finite support, any barycenter is a
discrete measure with finite support~\cite{EA-SB-JM:16}.  Moreover, since all the
measures are uniform, we can consider a minimizer with a discrete uniform
distribution. We now prove that $\mu_{\infty}$ is such a
minimizer. Firstly, by construction and the fact that~\eqref{eq:cond-clos} holds for $t\geq \bar{T}$, we have
\begin{align*}
W_2(\mu_{\infty},\mu_i(\bar{T}))^2&= \frac{1}{N}\sum^N_{j=1}\norm{\sum_{k=1}^n\lambda_kx^k_j(\bar{T})-x^i_j(\bar{T})}_2^2\\
  &< 2\epsilon<\frac{1}{N}\sum^N_{j=1}\norm{\sum^n_{k=1}\lambda_k
    x^k_{\sigma^k(j)}(\bar{T})-x^i_{\sigma^i(j)}(\bar{T})}_2^2
\end{align*}
for any $\sigma^i\in\Sigma_N$, $\sigma_i\neq I\! d$, $i\in V$. 
Then, 
\begin{align*}
  \sum_{i=1}^n\lambda_i W_2(\mu_{\infty},\mu_i(\bar{T}))^2&<
  \frac{1}{N}\sum^n_{i=1}\lambda_i\sum^N_{j=1}
  \norm{\sum^n_{k=1}\lambda_k x^k_{\sigma^k(j)}(\bar{T})-x^i_{\sigma^i(j)}(\bar{T})}_2^2\\
   &\leq \frac{1}{N}\sum^n_{i=1}\lambda_i\sum^N_{j=1}
  \norm{y_i-x^i_{\sigma^i(j)}(\bar{T})}_2^2
\end{align*}
for any $y=(y_1,\dots,y_n)^\top\in\R^{Nd}$. The last inequality of the previous expression is proved by 
treating the last term as an objective function to minimize with respect to $y$ 
using first optimality conditions to minimize such differentiable and strictly convex function (i.e., by setting the gradient with respect to $y$ equal to the zero vector and solving for $y$). Now, take $y_i\neq y_j\in\R^d$ for any $i,j\in \until{N}$, define the discrete uniform measure $\nu=\frac{1}{N}\sum^N_{j=1}\delta_{y_j}$, $y_j\in\R^d$; and let $\bar{\sigma}^i$, $i\in V$, be such that $W_2(\nu,\mu_i(\bar{T}))^2=\frac{1}{N}\sum^n_{i=1}\lambda_i\sum^N_{j=1}\norm{y_i-x^i_{\bar{\sigma}^i(j)}(\bar{T})}_2^2$. Then, our recent analysis implies that: (1) if there exists $i\in V$ such that $\bar{\sigma}_i\neq I\! d$, then $\sum_{i=1}^n\lambda_iW_2(\mu_{\infty},\mu_i(\bar{T}))^2<\sum_{i=1}^n\lambda_i W_2(\nu,\mu_i(\bar{T}))^2$; \\
(2) if $\bar{\sigma}_i= I\! d$ for all $i\in V$, then $\sum_{i=1}^n\lambda_i
W_2(\mu_{\infty},\mu_i(\bar{T}))^2\leq\sum_{i=1}^n\lambda_i
W_2(\nu,\mu_i(\bar{T}))^2$. \\ Given the generality of $\nu$, cases (1) and (2)
together imply that $\mu_{\infty}$ is a Wasserstein barycenter.

Finally, all of our previous results hold with probability one because we considered an arbitrary realization of the edge selection process for our analysis (note that $\lambda$ now becomes a random convex vector). This concludes the proof of statement~\ref{it-dis-gen-dir}.

We now focus on proving statement~\ref{it-dis-gen-sym}. Assume $\{i,j\}\in E$ is selected at time $t$. Without loss of generality, the update of the PaWBar algorithm can be set as 
%\begin{equation}
$x^i_{k}(t+1)=\frac{1}{2}x^i_{k}(t)+\frac{1}{2}x^j_{\sigma_{ij,t}(k)}(t)$ and $x^j_{k}(t+1)=\frac{1}{2}x^j_{k}(t)+\frac{1}{2}x^i_{\sigma_{ji,t}(k)}(t)$, 
%\end{equation}
$k\in\until{N}$; i.e., 
\begin{equation}
\label{eq:norm-updt-s}
\begin{aligned}
\bx^i(t+1)&=\frac{1}{2}\bx^i(t)+\frac{1}{2}(P(t)\otimes I_d)\bx^j(t),\\
\bx^j(t+1)&=\frac{1}{2}\bx^j(t)+\frac{1}{2}(P(t)^\top\otimes I_d)\bx^i(t)
\end{aligned}
\end{equation}
recalling that the permutation matrix $P(t)\in\{0,1\}^{N\times N}$ has $e_{\sigma_{ij,t}(k)}^\top$ as its $k$th row. Note that~\eqref{eq:norm-updt-s} can also be expressed as $\bx(t+1)=C(t)\bx(t)$, with matrix $C(t)=\diag^{i,n}(P(t)\otimes I_d)(A(t)\otimes I_{Nd})\diag^{i,n}(P^\top(t)\otimes I_d)$.

We make the following claim:
\begin{enumerate}[label=(ii.\alph*)]
\item\label{aux:claim_otr_2_1} for any $i^*,j^*\in V$, $i^*\neq j^*$, $\epsilon>0$ and time $t$, the event  
``$W_2(\mu_{i^*}(t+T),\mu_{j^*}(t+T))<\epsilon$ 
for some finite $T>0$" has positive probability. 
\end{enumerate}
Now, we prove the claim. Let us fix a spanning tree $G'$ of $G$. For any $i,j\in V$, let $\mathcal{P}_{i-j}$ denote the unique path between $i$ and $j$ in $G'$. Let %$\{i^*,j^*\}\in E$ denote the selected edge at time $t$, and let 
$$
U(t)=\max_{i,j\in V}%\max_{\substack{i,j\in V\\i\neq j}}
\sum_{\{p,q\}\in\Pa_{i-j}}W_2(\mu_p(t),\mu_q(t)).
%=\max_{\substack{i,j\in V\\i\neq j}}\sum_{\{p,q\}\in\Pa_{i-j}}\frac{1}{\sqrt{N}}\norm{\bx^p(t)-Q_{pq,t}\bx^q(t)}_2.
$$
Let $\{k,\ell\}\in\arg U(t)$ and $\Pa_{k-l}=(\{k,p_1\},\dots,\{p_{L-1},\ell\})$, i.e., edge $\{k,p_1\}$ is followed by $\{p_1,p_2\}$ and so on until $\{p_{L-1},\ell\}$. \emph{Case 1)} $W_2(\mu_k(t),\mu_{p_1}(t))\neq 0$. 
For simplicity we also assume $W_2(\mu_i(t),\mu_j(t))\neq 0$ for any $\{i,j\}\in\Pa_{k-\ell}$; otherwise, if there exists $\{i^*,j^*\}\in\Pa_{k-\ell}$ such that $W_2(\mu_{i^*}(t),\mu_{j^*}(t))=0$, we would need to use a similar analysis to Case 2) which will be treated later.  
Select $\{k,p_1\}$ at time $t$. If $\Pa_{k-\ell}$ contains only one element, then $p_1=\ell$ and $W_2(\mu_k(t+1),\mu_\ell(t+1))=0<U(t)=W_2(\mu_k(t),\mu_\ell(t))$. Now, consider $\Pa_{k-\ell}$ contains two or more elements. 
Set $\mathcal{U}(t)=\sum_{\{p,q\}\in\Pa_{k-\ell}\setminus\{\{k,p_1\},\{p_1,p_2\}\}}W_2(\mu_p(t),\mu_q(t))$ (with $p_2=\ell$ and $\mathcal{U}(t)=0$ if $\Pa_{k-\ell}$ only has two elements). Then
\begin{align*}
  \sum_{\{p,q\}\in\Pa_{k-\ell}} & W_2(\mu_p(t+1),\mu_q(t+1))=\mathcal{U}(t)
  +\frac{1}{\sqrt{N}}\norm{\bx^{p_1}(t+1)-Q_{p_1p_2,t+1}\bx^{p_2}(t)}_2\\
  &\leq\mathcal{U}(t)
  +\frac{1}{\sqrt{N}}\norm{\bx^{p_1}(t+1)-Q_{p_1p_2,t}\bx^{p_2}(t)}_2\\
  &\leq\mathcal{U}(t)+
  \frac{1}{2}W_2(\mu_{p_1}(t),\mu_{p_2}(t)) + 
  \frac{1}{2\sqrt{N}}\norm{Q_{p_1k,t}\bx^{k}(t)-Q_{p_1p_2,t}\bx^{p_2}(t)}_2\\
  &\leq\mathcal{U}(t)
  +\frac{1}{2}W_2(\mu_{p_1}(t),\mu_{p_2}(t))
  +\frac{1}{2\sqrt{N}}\norm{Q_{p_1k,t}\bx^{k}(t)-\bx^{p_1}(t)}_2\\
  &\quad +\frac{1}{2\sqrt{N}}\norm{\bx^{p_1}(t)-Q_{p_1p_2,t}\bx^{p_2}(t)}_2\\
  &=\mathcal{U}(t)+W_2(\mu_{p_1}(t),\mu_{p_2}(t))+\frac{1}{2}W_2(\mu_{k}(t),\mu_{p_1}(t)),
\end{align*}
and so $\sum_{\{p,q\}\in\Pa_{k-\ell}}W_2(\mu_p(t+1),\mu_q(t+1))<U(t)$. Therefore, for any length of $\Pa_{k-\ell}$, if $U(t+1)\leq \sum_{\{p,q\}\in\Pa_{k- \ell}}W_2(\mu_p(t+1),\mu_q(t+1))$, then $U(t+1)<U(t)$. If $U(t+1)>\sum_{\{p,q\}\in\Pa_{k- \ell}}W_2(\mu_p(t+1),\mu_q(t+1))$, then we can choose $\{\bar{k},\bar{\ell}\}\in\arg U(t+1)$ and, using the analysis just presented, obtain $\sum_{\{p,q\}\in\Pa_{\bar{k}-\bar{\ell}}}W_2(\mu_p(t+2),\mu_q(t+2))<U(t+1)$. If this does not imply $U(t+2)<U(t)$, we can keep iterating this procedure until, eventually, obtain $U(t+T)<U(t)$ for some $T>0$. 
\emph{Case 2)} $W_2(\mu_k(t),\mu_{p_1}(t))=0$. In this case, we do not select the edge $\{k,p_1\}$, but we consecutively check the edges along $\Pa_{k-\ell}$ starting from $\{k,p_1\}$ and look for the first $\{i^*,j^*\}\in\Pa_{k-\ell}$ such that $W_2(\mu_{i^*}(t),\mu_{j^*}(t))\neq 0$. We select this edge and a similar analysis to Case 1) implies that $\sum_{\{p,q\}\in\Pa_{k\to \ell}}W_2(\mu_p(t+1),\mu_q(t+1))\leq U(t)$. Then, we select the edge previous to $\{i^*,j^*\}$ and continue to successively select the preceding edges until reaching the first edge $\{k,p_1\}$. Once this edge is selected, say at time $\bar{t}$, the proof of case Case 1) let us conclude that $\sum_{\{p,q\}\in\Pa_{k-\ell}}W_2(\mu_p(\bar{t}+1),\mu_q(\bar{t}+1))<U(\bar{t})$, and we can continue the analysis of Case 1) until we have that $U(t+T)<U(\bar{t})\leq U(t)$ for some $T>0$. 
In conclusion, we proved the existence of some finite sequence of selected edges such that $U(t+T)<U(t)$ for some $T>0$. Moreover, we can iterate selections of such sequences to arbitrarily reduce the value of $U(t)$ after some finite time. Finally, claim~\ref{aux:claim_otr_2_1} follows from the fact that $\max_{i,j\in V}W_2(\mu_i(t),\mu_j(t))\leq U(t)$ and that any finite sequence of edges has a positive probability of being consecutively selected at any time $t$.

We can now follow the same analysis as in the proof of statement~\ref{it-dis-gen-dir} of the theorem -- using result~\ref{aux:claim_otr_2_1} and its proof instead of~\ref{aux:claim_3} -- to conclude that results~\ref{aux:claim_otr_1} and~\ref{aux:claim_otr_22} also hold for the symmetric PaWBar algorithm, after which the proof follows closely the one
%%%
%
%%We fix a realization of the edge selection process. 
%%%Consider any arrangement of $\supp(\mu_{\infty})$ in the support vector $x^*\in\R^{Nd}$. 
%%Consider any initial vector $\bx(0)$ and $\bx(0)'=\diag(P_1\otimes I_d,\dots,P_n\otimes I_d)\bx(0)$ with arbitrary permutation matrices $P_1,\dots,P_n\in\{0,1\}^{N\times N}$. 
%%As in the proof of statement~\ref{it-dis-gen-dir}, it is possible to prove that $\bx'(t)=\diag(P_1\otimes I_d,\dots,P_n\otimes I_d)\bx(t)$ for any $t$ after some algebraic work. 
%From here, we can closely follow the proof of statement~\ref{it-dis-gen-dir} to conclude the proof 
for statement~\ref{it-dis-gen-dir} again.
\end{proof}

\begin{proof}[Proof of Corollary~\ref{co:discr-uni}]
We follow the notation and proof of Theorem~\ref{th:discr-gen}. Note that the entries of $\bx^i(0)=(x^i_1,\dots,x^i_N)^\top$, $i\in V$, 
are sorted in ascending order. Then, $W^2_2(\mu_{i,0},\mu_{j,0})=\frac{1}{N}\sum_{k=0}^N(x^i_k-x^j_k)^2$ for $i,j\in V$. 
Now, consider the directed PaWBar algorithm and that $(i,j)\in E$ is selected at time $t=0$. Then $\bx^i(1)= (1-a_{ij})\bx^i(0)+a_{ij}\bx^j(0)$ and $\bx^i(1)$ has its entries sorted in ascending order. Then, it is easy to prove by induction that, at every time $t$, $\bx^i(t)$ for any $i\in V$ is sorted in ascending order with probability one. Considering $\bx(t)=(\bx^i(t),\dots,\bx^n(t))^\top\in\R^{nN}$, we have that $\bx(t+1)=(A(t)\otimes I_N)\bx(t)$ and so $\bx(t)=(\prod^t_{i=0}A(i)\otimes \vect{1}_N)\bx(0)$. Then, we conclude the proof for the directed PaWBar algorithm by using Proposition~\ref{prop:conv_A}
and the fact that $\sum_{i=1}^n\lambda_i\bx^i(0)\in\arg\min_{\substack{y\in\R^d\\y_i<\dots<y_n}}\frac{1}{N}\sum^n_{i=1}\lambda_i\sum^N_{k=0}(y_k-x^i_k)^2$ for any convex vector $\lambda\in\R^n$.
The symmetric case is proved similarly.
\end{proof}

\subsection{Proofs of results in Subsection~\ref{sub:cont}}

\begin{proof}[Proof of Theorem~\ref{th:cont-gen}]
Fix any $\gamma\in\{\mu_{i,0}\}_{i\in V}$. Let $\mu(t):=(\mu_1(t),\dots,\mu_n(t))^\top$ and \\$T_\gamma:=(T^{\mu_{1,0}}_\gamma,\dots,T^{\mu_{n,0}}_\gamma)^\top$. We first claim that $\mu(t)=\big(\prod^t_{\tau=0}A(\tau)T_\gamma\big)_\#\gamma$, 
where the push-forward notation $(\cdot)_\#$ is applied element-wise. We will prove this claim by induction.

Assume any $(i,j)\in E$ is selected at $t=0$. Then, 
%from the definition of the displacement interpolation, 
\begin{equation*}
%\label{eq:der-pr}
\begin{aligned}
\mu_i(1)&=((1-a_{ij})I\!d+a_{ij}T^{\mu_{j,0}}_{\mu_{i,0}})_{\#}\mu_{i,0}\\
&=((1-a_{ij})I\!d+a_{ij}T^{\mu_{j,0}}_{\mu_{i,0}})_{\#}(T^{\mu_{i,0}}_\gamma)_\#\gamma\\
%&=\big(((1-a_{ij})T^{\mu_{i,0}}_{\gamma}+a_{ij}(T^{\mu_{j,0}}_{\mu_{i,0}}\circ T^{\mu_{i,0}}_{\gamma}\big)_{\#}\gamma\\
&=((1-a_{ij})T^{\mu_{i,0}}_{\gamma}+a_{ij}T^{\mu_{j,0}}_{\mu_{i,0}}\circ T^{\mu_{i,0}}_{\gamma})_{\#}\gamma\\
&=((1-a_{ij})T^{\mu_{i,0}}_{\gamma}+a_{ij}T^{\mu_{j,0}}_{\gamma})_{\#}\gamma,
\end{aligned}
\end{equation*}
where the third equality follows from the property that 
$$(A)_\#(B)_\#\mu=((B)_\#\mu)(A^{-1}(\cdot))=\mu(B^{-1}A^{-1}(\cdot))=\mu((A\circ B)^{-1}(\cdot))=(A\circ B)_\#\mu$$
 for any measure $\mu$ and appropriate measurable maps $A,B$; and the last equality follows from the %property of measures in a 
compatible collection. Thus, we have that, with probability one, $\mu(1)=\big(A(0)T_\gamma\big)_\#\gamma$. Again, without loss of generality, let us consider that $(i,j)\in E$ was chosen at $t=0$ 
%
% Thus, we have concluded that $(T_\gamma^{\mu_i(1)})_\#\gamma=((1-a_{ij})T^{\mu_{i,0}}_{\gamma}+a_{ij}T^{\mu_{j,0}}_{\gamma})_{\#}\gamma$ and $(T_\gamma^{\mu_k(1)})_\#\gamma=(T_\gamma^{\mu_k(0)})_\#\gamma$ for any $k\neq i$. 
%%\pc{Now, using the assumption that $\cup_{\tau=0}^t\{\mu_{i}(\tau)\}_{i\in V}$ is compatible under interpolation for every $t$ with probability one,} we can follow a similar derivation to~\eqref{eq:der-pr} and prove that 
%%$
%%(T_{\gamma}^{\mu_i(t+1)})_{\#}\gamma=((1-a_{ij})T^{\mu_i(t)}_{\gamma}+a_{ij}T^{\mu_j(t)}_{\gamma})_{\#}\gamma
%%$ for every $t$ \pc{with probability one}. 
%
%
and analyze all possible updates at $t=1$. Assume some edge $(p,q)\in E$ is chosen. Then, $\mu_p(2)=((1-a_{pq})Id+a_{pq}T^{\mu_q(1)}_{\mu_p(1)})_\#\mu_p(1)$. Now, observe that 
$\mu_i(1)=((1-a_{ij})T^{\mu_{i,0}}_\gamma+a_{ij}T^{\mu_{j,0}}_\gamma)_\#\gamma$ and $\mu_{k}(1)=(T^{\mu_{k,0}}_\gamma)_\#\gamma$, $k\neq i$, and so $\mu_{p}(1)=(\sum^n_{\ell=1}\xi_\ell^p T^{\mu_{\ell,0}}_\gamma)_\#\gamma$ for some nonnegative constants $\{\xi_\ell^p\}_{\ell=1}^n$. Then, $\mu_p(2)=((1-a_{pq})\sum^n_{\ell=1}\xi_\ell^p T^{\mu_{\ell,0}}_\gamma+a_{pq}T^{\mu_q(1)}_{\mu_p(1)}\circ \sum^n_{\ell=1}\xi_\ell^p T^{\mu_{\ell,0}}_\gamma)_\#\gamma$, and so we only need to check $T^{\mu_q(1)}_{\mu_p(1)}\circ\sum^n_{\ell=1}\xi_\ell^p T^{\mu_{\ell,0}}=\sum^n_{\ell=1}\xi_\ell^q T^{\mu_{\ell,0}}$ holds for all different cases of chosen $(p,q)\in E$ to ensure $\mu(2)=(A(1)A(0)T_\gamma)_\#\gamma$. 
In the following cases, we make repeated use of the definition of displacement interpolation and the assumption of closure under interpolation.
\textbf{Case 1 -- $(i,j)\in E$ is chosen again:} $T^{\mu_j(1)}_{\mu_i(1)}\circ((1-a_{ij})T^{\mu_{i,0}}_\gamma+a_{ij}T^{\mu_{j,0}}_\gamma)=T^{\mu_{j,0}}_{\mu_i(1)}\circ((1-a_{ij})Id+a_{ij}T^{\mu_{j,0}}_\gamma\circ T^\gamma_{\mu_{i,0}})\circ T^{\mu_{i,0}}_\gamma
%=T^{\mu_{j,0}}_{\mu_i(1)}\circ((1-a_{ij})Id+a_{ij}T^{\mu_{j,0}}_{\mu_{i,0}})\circ T^{\mu_{i,0}}_\gamma
=T^{\mu_{j,0}}_{\mu_i(1)}\circ T^{\mu_i(1)}_{\mu_{i,0}}\circ T^{\mu_{i,0}}_\gamma=T^{\mu_{j,0}}_{\mu_{i,0}}\circ T^{\mu_{i,0}}_\gamma=T^{\mu_{j,0}}_\gamma$.
\textbf{Case 2 -- $(i,k)\in E,k\neq j,$ is chosen:} $T^{\mu_k(1)}_{\mu_i(1)}\circ((1-a_{ij})T^{\mu_{i,0}}_\gamma+a_{ij}T^{\mu_{j,0}}_\gamma)=T^{\mu_{k,0}}_{\mu_i(1)}\circ((1-a_{ij})Id+a_{ij}T^{\mu_{j,0}}_{\mu_{i,0}})\circ T^{\mu_{i,0}}_\gamma=T^{\mu_{k,0}}_{\mu_i(1)}\circ T^{\mu_i(1)}_{\mu_{i,0}}\circ T^{\mu_{i,0}}_\gamma=
%T^{\mu_k(0)}_{\mu_{i,0}}\circ T^{\mu_{i,0}}_\gamma=
T^{\mu_{k,0}}_\gamma$.
\textbf{Case 3 -- $(k,i)\in E$ is chosen:} $T^{\mu_i(1)}_{\mu_k(1)}\circ T^{\mu_{k,0}}_\gamma=
T^{\mu_i(1)}_{\mu_{k,0}}\circ T^{\mu_{k,0}}_\gamma=T^{\mu_i(1)}_\gamma=T^{\mu_i(1)}_{\mu_{i,0}}\circ T^{\mu_{i,0}}_\gamma=((1-a_{ij})Id+a_{ij}T^{\mu_{j,0}}_{\mu_{i,0}})\circ T^{\mu_{i,0}}_\gamma=((1-a_{ij})T^{\mu_{i,0}}_\gamma+a_{ij}T^{\mu_{j,0}}_\gamma)$. 
\textbf{Case 4 -- any other $(p,q)\in E$ chosen:} $T^{\mu_q(1)}_{\mu_p(1)}\circ T^{\mu_{p,0}}_\gamma=T^{\mu_{q,0}}_{\mu_{p,0}}\circ T^{\mu_{p,0}}_\gamma=T^{\mu_{q,0}}_\gamma$.
From cases $1-4$, we conclude that $\mu(2)=\big(A(1)A(0)T_\gamma\big)_\#\gamma$.
Now, let us analyze the update at time $t\geq 2$ with the induction hypothesis that $\mu_k(t)=(\sum^n_{\ell=1}\xi_\ell^k T^{\mu_{\ell,0}}_\gamma)_\#\gamma$ for any $k\in V$, and $T^{\mu_j(t)}_{\mu_i(t)}\circ\big(\sum^n_{\ell=1}\xi^i_{\ell}T^{\mu_{\ell,0}}_\gamma\big)=\sum^n_{\ell=1}\xi^j_{\ell}T^{\mu_{\ell,0}}_\gamma$, with appropriate nonnegative constants $\{\xi^k_\ell\}_{k,\ell}^n$ such that $\mu(t)=(\prod_{\tau=0}^{t-1}A(\tau)T_\gamma)_\#\gamma$. 
%Then, a similar analysis to cases $1-4$ let us conclude that 
If edge $(p,q)\in E$ is selected, $\mu_{p}(t+1)=((1-a_{pq})Id+a_{pq}T^{\mu_q(t)}_{\mu_p(t)})_\#\mu_p(t)=((1-a_{pq})\sum^n_{\ell=1}\xi^p_{\ell}T^{\mu_{\ell,0}}_\gamma+a_{pq}\sum^n_{\ell=1}\xi^q_{\ell}T^{\mu_{\ell,0}}_\gamma)_\#\gamma$, and so 
$\mu(t+1)=(A(t)\prod^{t-1}_{\tau=0}A(\tau)T_\gamma)_\#\gamma$. By induction, this proves the claim.
%
%\pc{
%Now, let us define $T_\gamma(t):=(T_{\gamma}^{\mu_1(t)},\dots,T_{\gamma}^{\mu_n(t)})^\top$, so that $(T_\gamma(t))_\#\gamma:=((T_{\gamma}^{\mu_1(t)})_\#\gamma,\dots,(T_{\gamma}^{\mu_n(t)})_\#\gamma)^\top$.
%%
%Then, we observe that the updates of the system can be expressed as 
%\begin{align*}
%(T_\gamma(t+1))_\#\gamma&=(A(t)T_\gamma(t))_\#\gamma\\
%&=\big(\big(\prod^t_{\tau=0}A(\tau)\big)T_\gamma(0)\big)_\#\gamma.
%\end{align*}
%}

Considering our proved claim, we use Proposition~\ref{prop:conv_A} and obtain 
$$\lim_{t\to\infty}\mu(t)=(\lim_{t\to\infty}\prod_{\tau=0}^{t}A(\tau)T_\gamma)_\#\gamma=(\vect{1}_n\lambda^\top T_\gamma)_\#\gamma$$
%Then, Proposition~\ref{prop:conv_A} implies that $\lim_{t\to\infty}T_x(t)=(\vect{1}_n\lambda^\top\otimes I_d)T_x(0)$ 
for some random convex vector $\lambda=(\lambda_1,\dots,\lambda_n)^\top$ with probability one. This gives the consensus result $\lim_{t\to\infty}\mu_i(t)=(\sum^{n}_{j=1}\lambda_j T^{\mu_{j,0}}_{\gamma})_\#\gamma$.
%Thus, we conclude the following consensus result: for any $i\in V$ and any $x\in\supp(\gamma)$,
%\begin{multline*}
%\lim_{t\to\infty}T^{\mu_i(t)}_{\gamma}(x) = \sum^{n}_{j=1}\lambda_j T^{\mu_{j,0}}_{\gamma}(x)\\\Longrightarrow
%\lim_{t\to\infty}\mu_i(t) = \big(\sum^{n}_{j=1}\lambda_j T^{\mu_{j,0}}_{\gamma}\big)_{\#}\gamma.
%\end{multline*}
%%with $\lambda:=(\lambda_1,\dots,\lambda_n)^\top$ being a random convex vector. 
%%Therefore, 
Finally, we conclude from~\cite[Theorem~3.1.9]{VMP-YZ:20} that the measure $\mu_{\infty}:=\big(\sum^{n}_{j=1}\lambda_j T^{\mu_j(t)}_{\gamma}\big)_{\#}\gamma$ is the unique solution to the barycenter problem with convex vector $\lambda$, i.e., equation~\eqref{eq:dis-cont-infty} is proved. This concludes the proof of statement~\ref{it-cont-gen-dir}. 
Statement~\ref{it-cont-gen-sym} is proved with a similar analysis. 
\end{proof}

\begin{proof}[Proof of Corollary~\ref{co:discr-gen-uni}]
We only focus on proving the results for the directed PaWBar algorithm, since the proofs for the symmetric PaWBar algorithm are very similar and thus omitted. 
Consider any two absolutely continuous measures $\alpha,\beta \in \PP(\R)$. Then, we have 1) $\alpha=(F^{-1}_{\alpha})_{\#}\mathcal{L}$, with $\mathcal{L}$ being the Lebesgue measure on $[0,1]$; and 2) the optimal transport map from $\alpha$ to $\beta$ is the so-called Brenier's map $T^\beta_\alpha = F^{-1}_{\beta}\circ F_{\alpha}$~\cite[Theorem~2.5]{FS:15}. Thus, since any measure obtained from a displacement interpolation is another absolutely continuous measure in $\PP(\R)$, it is straightforward to conclude that the set of all absolutely continuous measures forms a compatible collection which is closed under interpolation. Since Theorem~\ref{th:cont-gen}'s assumption is satisfied, we can fix any $\gamma\in\{\mu_{i,0}\}_{i\in V}$ and replace the Brenier's maps $F^{-1}_{\mu_{i,0}}\circ F_{\gamma}$, $i\in V$,  in the Wasserstein barycenter $\mu_{\infty}$ expression in statement~\ref{it-cont-gen-dir} of Theorem~\ref{th:cont-gen} to conclude the proof.
%
%
%First, consider the initial measures as in case~\ref{it:co-1}. As mentioned in~\cite[Section~2.3]{VMP-YZ:20}, the set of \pc{absolutely} continuous measures in $\PP(\R)$ forms a compatible collection\pc{; therefore, since any measure obtained from displacement interpolation is another absolutely continuous measure in $\PP(\R)$, the assumption of closeness under interpolation of Theorem~\ref{th:cont-gen} is satisfied.} 
%Then, for any $i\in V$, we use the well-known property that $\mu_i(t)=(F^{-1}_{\mu_i(t)})_{\#}\mathcal{L}$, with $\mathcal{L}$ being the Lebesgue measure on $[0,1]$. 
%Moreover, the \pc{optimal transport map} from $\mu_{i,0}$ to $\mu_{j,0}$ for any $i,j\in V$ \pc{is} the so-called Brenier's map $F^{-1}_{\mu_{j,0}}\circ F_{\mu_{i,0}}$ \cite[Theorem~2.5]{FS:15}. We use these two results in the expression for the Wasserstein barycenter $\mu_{\infty}$
%in statement~\ref{it-cont-gen-dir} of Theorem~\ref{th:cont-gen} and conclude the proof.

Now we consider case~\ref{it:co-2}. 
%\pc{We prove that the initial measures are a compatible collection closed under interpolation.}
We first remark that a displacement interpolation between any two initial measures will result in zero-mean multivariate Gaussian variables with a closed form expression for their covariance matrices~\cite{YC-TTG-AT:19}. Thus, we consider a measure $\gamma$ with covariance matrix $\Sigma_\gamma$ resulting from the displacement interpolation with fixed parameter $\lambda\in(0,1)$ between two arbitrary measures $\mu_{i,0}$ and $\mu_{j,0}$ from the initial set of measures. Then, $\Sigma_\gamma=\Sigma_{i,0}^{-1/2}((1-\lambda)\Sigma_{i,0}+\lambda(\Sigma_{i,0}^{1/2}\Sigma_{j,0}\Sigma_{i,0}^{1/2})^{1/2}))^2\Sigma_{i,0}^{-1/2}$ (see ~\cite{YC-TTG-AT:19}), and some algebraic work using the fact that both $\Sigma_{i,0}$ and $\Sigma_{j,0}$ are diagonizable with the orthogonal matrix $U$ let us conclude that $\Sigma_\gamma=U^\top ((1-\lambda)D_{i,0}^{1/2}+\lambda D_{j,0}^{1/2})^2U$. Since a set of zero-mean Gaussian distributions that are diagonizable under the same orthogonal matrix $U$ and which contains the standard Gaussian distribution forms a compatible collection with the linear optimal transport map $T^{\mu_{j,0}}_{\mu_{i,0}}=\Sigma^{1/2}_{j,0}\Sigma^{-1/2}_{i,0}$ (using our notation of the initial set of measures)~\cite[Section~2.3]{VMP-YZ:20}, we just proved that the initial set of measures is closed under interpolation.
% $\Sigma_\lambda$ is diagonizable with the same orthogonal matrix $U$, then it follows that $\cup_{i\in V}\mu_{i,0}\cup\mu_i(1)$ is a compatible collection; see.
Thus, we can use Theorem~\ref{th:cont-gen} to imply the convergence to the Wasserstein barycenter and~\cite[Theorem~2.4]{YC-TTG-AT:19} provides the shown characterization of the barycenter. 
%This concludes the proof.
%
\end{proof}

\subsection{Proofs of results in Subsection~\ref{sec:gen_measures}}

\begin{proof}[Proof sketch of Theorem~\ref{th:gen-conv}]
We first consider the directed PaWBar algorithm in case~\ref{it:gen-1}. Consider any $(i,j)\in E$ is selected at time $t$. From the definition of constant-speed geodesics~\cite{FS:15}, it follows that, 
\begin{equation}
\label{eq:geod-1}
\begin{aligned}
&W_2(\mu_i(t+1),\mu_j(t))=(1-a_{ij})W_2(\mu_i(t),\mu_j(t)),\\
&W_2(\mu_i(t+1),\mu_i(t))=a_{ij}W_2(\mu_i(t),\mu_j(t)).
\end{aligned}
\end{equation}
If $(i,j)$ is chosen $\tau$ times consecutively starting at time $t$, then $W_2(\mu_i(t+\tau),\mu_j(t))=(1-a_{ij})^\tau W_2(\mu_i(t),\mu_j(t))$. 

Now, set $$U(t)=\sum_{(i,j)\in E}W_2(\mu_i(t),\mu_j(t)).$$ 
%\pc{In~\cite{PCV-FB:21}, we use~\eqref{eq:geod-1} and the topology of $G$ to prove that,}  
%
Assume any $(i^*,j^*)\in E$ is selected at time $t$, and let $(k^*,i^*)\in E$ (since $G$ is a cycle). Then, setting $\mU(t)=\sum_{(i,j)\in E\setminus{\{(i^*,j^*),(k^*,i^*)\}}}W_2(\mu_i(t),\mu_j(t))$, 
\begin{align*}
U(t+1)
%&=\sum_{(i,j)\in E}W_2(\mu_i(t+1),\mu_j(t+1))\\
&=W_2(\mu_{i^*}(t+1),\mu_{j^*}(t))+W_2(\mu_{i^*}(t+1),\mu_{k^*}(t))+ \mU(t)\\
&\leq W_2(\mu_{i^*}(t+1),\mu_{j^*}(t))+W_2(\mu_{i^*}(t+1),\mu_{i^*}(t))+W_2(\mu_{i^*}(t),\mu_{k^*}(t))+ \mU(t)\\
&=(1-a_{i^*j^*})W_2(\mu_{i^*}(t),\mu_{j^*}(t))+a_{i^*j^*}W_2(\mu_{i^*}(t),\mu_{j^*}(t))\\
&\quad+W_2(\mu_{i^*}(t),\mu_{k^*}(t))+\mU(t)\\
&=W_2(\mu_{i^*}(t),\mu_{j^*}(t))+W_2(\mu_{i^*}(t),\mu_{k^*}(t))+ \mU(t)=U(t)
\end{align*}
where we used the triangle inequality, and then equation~\eqref{eq:geod-1} for the last equality. Thus, with probability one, $(U(t))_{t\geq 0}$ is a non-increasing sequence uniformly lower bounded by zero, which then implies $U(t)$ converges to some lower bound which we need to prove to be zero.
%with probability one, $(U(t))_{t\geq 0}$ is a non-increasing sequence \pc{such that $U(t)\to 0$ as $t\to\infty$. 
%}
%
Consider the nontrivial case $U(t)\neq 0$ and again any $(i^*,j^*)\in E$. Since $G$ is a cycle, there is a unique directed path $\Pa_{j^*\to i^*}$ from $j^*$ to $i^*$ of length $n-1$. 
Let $\Pa_{j^*\to i^*}=((j^*,\ell_{1}),\dots, (\ell_{n-2},i^*))$. 
Consider $(i^*,j^*)$ was selected at any time $t$. 
Now, pick positive numbers $\epsilon_1,\dots,\epsilon_{n-1}$ such that $\sum^{n-1}_{k=1}\epsilon_k<\frac{U(t)}{2}$. Then, from the sentence below~\eqref{eq:geod-1}, we can first select $T_1$ times the edge $(\ell_{n-2},i^*)$ such that $W_2(\mu_{\ell_{n-2}}(t+T_1),\mu_{i^*}(t))<\epsilon_L$; then, we can select $T_2$ times the edge $(\ell_{n-3},\ell_{n-2})$ such that $W_2(\mu_{\ell_{n-3}}(t+T_1+T_2),\mu_{\ell_{n-2}}(t+T_1))<\epsilon_{n-2}$; and we can continue like this until finally selecting $T_{n-1}$ times the edge $(j^*,\ell_{1})$ such that $W_2(\mu_{j^*}(t+T),\mu_{\ell_{1}}(t+\sum^{n-2}_{k=1} T_k))<\epsilon_{1}$, with $T=\sum^{n-1}_{k=1} T_k$. Then, %and using the triangle inequality
\begin{align*}
%&\sum_{(i,j)\in E\setminus\{(i^*,j^*)\}}W_2(i(t+T),j(t+T))=
\sum_{(i,j)\in \Pa_{j^*\to i^*}} W_2(\mu_i(t+T),\mu_j(t+T))&= W_2\Big(\mu_{j^*}(t+T),\mu_{\ell_1}\big(\sum^{n-2}_{k=1} T_k\big)\Big)\\
&\quad+\sum_{m=1}^{n-3}W_2\Big(\mu_{\ell_m}\big(t+\sum_{k=1}^{n-1-m}T_k\big),\mu_{\ell_{m+1}}\big(t+\sum_{k=1}^{n-1-(m+1)}T_k\big)\Big)\\
&\quad+W_2(\mu_{\ell_{n-2}}(t+T_1),\mu_{i^*}(t))\\
&<\sum_{i=1}^{n-1} \epsilon_i< \frac{U(t)}{2}.
\end{align*}
Moreover, this result and the triangle inequality imply

which along the triangle inequality implies 
$$W_2(\mu_{i^*}(t+T),\mu_{j^*}(t+T))\leq\sum_{(i,j)\in \Pa_{j^*\to i^*}}W_2(\mu_i(t+T),\mu_j(t+T))<\frac{U(t)}{2},$$
and 
thus $U(t+T)=W_2(\mu_{i^*}(t+T),\mu_{j^*}(t+T))+\sum_{(i,j)\in \Pa_{j^*\to i^*}}W_2(\mu_{i}(t+T),\mu_j(t+T))<\frac{U(t)}{2}+\frac{U(t)}{2}=U(t)$. 
%Since the event 
This implies the event ``$U(t+T)<U(t)$ for some finite $T>0$" has positive probability of happening at any time $t$ (because the finite sequence of edges described above has a positive probability of being selected sequentially at any time $t$), and so it can happen infinitely often with probability one. 
%Therefore, we conclude that 
%$U(t)\to 0$ as $t\to\infty$ with probability one.
%
%\pc{In~\cite{PCV-FB:21} we prove that the event ``$U(t+T)<U(t)$ for some finite $T>0$" can happen infinitely often with probability one, and so} 
Therefore, we conclude that 
$U(t)\to 0$ as $t\to\infty$ with probability one. 
Then, $G$ being a cycle implies $U(t)=0$ iff $\mu_i(t)=\mu_j(t)$ for any $i,j\in V$, and the consensus result~\eqref{eq:conv-gen} follows. The particular value of the consensus measure $\mu_\infty$ is random since it 
may depend on the specific realization of the edge selection process. This finishes the proof for case~\ref{it:gen-1}.  

Finally, for the symmetric PaWBar algorithm in case~\ref{it:gen-2}, let $E=\{(1,2),\dots,(n-1,n)\}$  without loss of generality and set 
\begin{equation}
\label{eq:sep-s}
U(t)=\sum^{n-1}_{i=1}W_2(\mu_i(t),\mu_{i+1}(t)).
\end{equation}
Consider any $\{i,i+1\}\in E$ is selected at time $t$. In the following, consider this notation: for any $a,b\in\{t,t+1\}$ and $k\geq 1$, set $W_2(\mu_{1-k}(a),\mu_{1}(b))=0$ and $W_2(\mu_{n}(a),\mu_{n+k}(b))=0$. 
Then, setting $\mU(t)=\sum_{\substack{j=1\\j\neq i-1,i,i+1}}^n W_2(\mu_j(t),\mu_{j+1}(t))$, 
\begin{align*}
U(t+1)&=W_2(\mu_{i-1}(t),\mu_{i}(t+1))
%+W_2(\mu_{i}(t+1),\mu_{i+1}(t+1))
+W_2(\mu_{i+1}(t+1),\mu_{i+2}(t))+\mU(t)\\
%&=W_2(\mu_{i-1}(t),\mu_{i}(t+1))+W_2(\mu_{i+1}(t+1),\mu_{i+2}(t))\\
%&\quad +\sum_{\substack{j=1\\j\neq i-1,i,i+1}}^n W_2(\mu_j(t),\mu_{j+1}(t))\\
&\leq W_2(\mu_{i-1}(t),\mu_{i}(t))+W_2(\mu_{i}(t),\mu_{i}(t+1))+W_2(\mu_{i+1}(t+1),\mu_{i+1}(t))\\
&\quad +W_2(\mu_{i+1}(t),\mu_{i+2}(t))+\mU(t)\\
&=\frac{1}{2}W_2(\mu_{i}(t),\mu_{i+1}(t))+\frac{1}{2}W_2(\mu_{i}(t),\mu_{i+1}(t))+ \sum_{\substack{j=1\\j\neq i}}^n W_2(\mu_j(t),\mu_{j+1}(t))\\
&=U(t),
\end{align*}
where we used the triangle inequality and equation~\eqref{eq:geod-1}. 
Then $U(t+1)\leq U(t)$ with probability one. Following a similar analysis to case~\ref{it:gen-1}, assume the nontrivial case $U(t)\neq 0$. If $W_2(\mu_1(t),\mu_2(t))\neq 0$ or $W_2(\mu_{n-1}(t),\mu_n(t))\neq 0$, then it follows from our previous derivation that choosing the edge $\{1,2\}$ or $\{n-1,n\}$ at time $t$ implies $U(t+1)<U(t)$. Now if $W_2(\mu_1(t),\mu_2(t))=W_2(\mu_{n-1}(t),\mu_n(t))=0$ (obviously we consider $n\geq 4$ since for $n=2,3$ there is nothing to prove), 
then, it is easy to prove that we can select a finite sequence of edges, say of some length $T'$, such that $W_2(\mu_1(t+T'),\mu_2(t+T'))\neq 0$ or $W_2(\mu_{n-1}(t+T'),\mu_n(t+T'))\neq 0$. After such sequence is selected, we can select $\{1,2\}$ or $\{n-1,n\}$ so that $U(t+T'+1)<U(t+T')\leq U(t)$. Therefore, at any time $t$, the event ``$U(t+T)<U(t)$ for some finite $T>0$" has positive probability. Finally, following a similar analysis to case~\ref{it:gen-1}, we conclude that $U(t)\to 0$ as $t\to\infty$ with probability one and conclude the convergence proof of case~\ref{it:gen-2}. 
\end{proof}
\section{The relevance of the PaWBar algorithm in opinion dynamics}
\label{sec:app-ex}
In this section we discuss how the directed PaWBar algorithm generalizes a
well-known opinion dynamics model with real-valued beliefs to a model with
probability distributions as beliefs.  Assume the strongly-connected
weighted digraph $G=(V,E,A)$ describes a social network, whereby each agent
is an individual and the weight $a_{ij}\in(0,1)$, for each $(i,j)\in E$,
indicates how much influence individual $i$ accords to individual $j$.
Traditionally in the field of opinion dynamics, the opinion or belief of
any $i\in V$ at time $t$ is modeled as a scalar $x_i(t)\in \R$.  In the
popular \emph{asynchronous averaging model} (e.g., see~\cite{GD-DN-FA-GW:00,DA-AO:11})
beliefs evolve as follows: if $(i,j)\in E$ is selected at time $t$, then
$x_i(t+1)=(1-a_{ij})x_i(t)+a_{ij}x_j(t)$. Note that the PaWBar algorithm
specializes to the asynchronous averaging model (as a consequence of
Theorem~\ref{th:discr-gen}) when each agent
has a degenerate initial distribution with unit mass at a single scalar
value.

It is easy to formulate a second generalization of the asynchronous
averaging model.  Let $\mu_i(t)$ and $\mu_j(t)$ denote the beliefs of
individuals~$i$ and~$j$, assume $(i,j)\in E$ is selected at time $t$, and
consider the update $\mu_i(t+1)=(1-a_{ij})\mu_i(t)+a_{ij}\mu_j(t)$. This
second model is a simple (weighted) averaging of the beliefs; we call it
the \emph{AoB model}. To understand the similarities and difference between
the PaWBar and AoB models, assume the beliefs of individuals $i$ and $j$ at
time $t$ are Gaussian distributions $\mathcal{N}(x_i(t),\sigma)$ and
$\mathcal{N}(x_j(t),\sigma)$ with equal variance.  Under this assumption,
one can see that both models predict that $i$'s mean opinion evolves
according to $x_i(t+1)=(1-a_{ij})x_i(t)+a_{ij}x_j(t)$. However, the two
models differ in the predicted overall belief and, specifically:
\begin{alignat}{3} 
  & \text{PaWBar model:} \qquad & \mu_i(t+1)&:= \mathcal{N}\big((1-a_{ij})x_i(t)+a_{ij}x_j(t),\sigma\big), \\
  & \text{AoB model:} \qquad & \mu_i(t+1)&:=
  (1-a_{ij})\mathcal{N}(x_i(t),\sigma)+a_{ij}\mathcal{N}(x_j(t),\sigma).
\end{alignat}
In other words, the PaWBar model predicts a Gaussian belief and the AoB
model predicts a Gaussian mixture belief. Even though both resulting
beliefs have the same mean, they overall differ substantially.

Finally, we argue that the PaWBar algorithm is preferable over the AoB
model for opinion evolution from a cognitive psychology viewpoint.  In the
case of initial Gaussian beliefs, the PaWBar algorithm dictates that $i$'s
belief is simply Gaussian at every time. Thus, as $i$ continues her
interactions in the social network, the memory cost associated to her
belief at all times is constant: $i$ remembers only two scalars, i.e., the
mean opinion and its variance. Instead, if $i$ updates her belief according
to the AoB model, then her belief is a Gaussian mixture at every time and
$i$ is required to remember a more complicated belief structure.  Thus, the
AoB model implies that $i$ requires more cognitive power and memory to
process the information she gathers from her interactions.  The problem
with the AoB approach is that arguably individuals tend to simplify beliefs
in order to both remember and process thoughts more economically. This
simplification of beliefs has attributed humans the metaphor of being
\emph{cognitive misers} in cognitive
psychology~\cite{SF-SET:17,PJO-JCT:90}.  Therefore, a model with more
economic belief memory requirements, such as our PaWBar algorithm, is
arguably more adequate.

\section{Conclusion}
\label{sec:concl}

We propose the PaWBar algorithm based on stochastic asynchronous pairwise
interactions. For specific classes of discrete and absolutely continuous
measures, we characterize the computation of both randomized and standard
Wasserstein barycenters under arbitrary graphs. For the case of general
measures, we prove a consensus result and leave the existence of a barycenter
as an open problem. We also
specialize our algorithm to the Gaussian case and establish a relationship
with models of opinion dynamics.

We hope our paper elicits research on efficient numerical solvers for the
distributed computation of Wasserstein barycenters based on pairwise
computations. 
%Given the plethora of applications for barycenters, we hope
%our paper \pc{motivates more} interest in the application of distributed randomized
%barycenters to engineering, economic or scientific domains. 
%%Finally, given
%%the importance of Gaussian distributions, we envision theoretical progress
%%in proving the conjecture proposed in our paper. \pc{As future work, }
%
As future work, given the importance of Gaussian distributions, we
envision theoretical progress in proving the conjecture proposed in our
paper. Another open problem is to design consensus algorithms that
  guarantee the \emph{exact} computation of a desired weighted Wasserstein
  barycenter through asynchronous pairwise computations, an unsolved
  problem presented in~\cite{ANB-AD:14}.

\bibliographystyle{plainurl}
\bibliography{alias,Main,FB}

\begin{thebibliography}{10}

\bibitem{DA-AO:11}
D.~Acemoglu and A.~Ozdaglar.
\newblock Opinion dynamics and learning in social networks.
\newblock {\em Dynamic Games and Applications}, 1(1):3--49, 2011.
\newblock \href {http://dx.doi.org/10.1007/s13235-010-0004-1}
  {\path{doi:10.1007/s13235-010-0004-1}}.

\bibitem{MA-GC:11}
M.~Agueh and G.~Carlier.
\newblock Barycenters in the {Wasserstein} space.
\newblock {\em SIAM Journal on Mathematical Analysis}, 43:904--924, 2011.
\newblock \href {http://dx.doi.org/10.1137/100805741}
  {\path{doi:10.1137/100805741}}.

\bibitem{PCAE-EdB-JACA-CM:16}
P.~C. \'{A}lvarez Esteban, E.~{del~Barrio}, J.~A. Cuesta-Albertos, and
  C.~Matr\'{a}n.
\newblock A fixed-point approach to barycenters in {Wasserstein} space.
\newblock {\em Journal of Mathematical Analysis and Applications},
  441(2):744--762, 2016.
\newblock \href {http://dx.doi.org/10.1016/j.jmaa.2016.04.045}
  {\path{doi:10.1016/j.jmaa.2016.04.045}}.

\bibitem{EA-SB-JM:16}
E.~Anderes, S.~Borgwardt, and J.~Miller.
\newblock Discrete {Wasserstein} barycenters: {Optimal} transport for discrete
  data.
\newblock {\em Mathematical Methods of Operations Research}, 85:389--409, 2016.
\newblock \href {http://dx.doi.org/10.1007/s00186-016-0549-x}
  {\path{doi:10.1007/s00186-016-0549-x}}.

\bibitem{MB-PKW-UDH:15}
M.~Baum, P.~K. Willett, and U.~D. Hanebeck.
\newblock On {Wasserstein} barycenters and {MMOSPA} estimation.
\newblock {\em IEEE Signal Processing Letters}, 22(10):1511--1515, 2015.
\newblock \href {http://dx.doi.org/10.1109/LSP.2015.2410217}
  {\path{doi:10.1109/LSP.2015.2410217}}.

\bibitem{JDB-YB:00}
J.-D. Benamou and Y.~Brenier.
\newblock A computational fluid mechanics solution to the {Monge-Kantorovich}
  mass transfer problem.
\newblock {\em Numerische Mathematik}, 84:375--393, 2000.
\newblock \href {http://dx.doi.org/10.1007/s002110050002}
  {\path{doi:10.1007/s002110050002}}.

\bibitem{ANB:14}
A.~N. {Bishop}.
\newblock Information fusion via the {Wasserstein} barycenter in the space of
  probability measures: {Direct} fusion of empirical measures and {Gaussian}
  fusion with unknown correlation.
\newblock In {\em International Conference on Information Fusion}, pages 1--7,
  2014.

\bibitem{ANB-AD:14}
A.~N. Bishop and A.~Doucet.
\newblock Distributed nonlinear consensus in the space of probability measures.
\newblock {\em IFAC Proceedings Volumes}, 47(3):8662--8668, 2014.
\newblock \href {http://dx.doi.org/10.3182/20140824-6-ZA-1003.00341}
  {\path{doi:10.3182/20140824-6-ZA-1003.00341}}.

\bibitem{ANB-AD:21}
A.~N. Bishop and A.~Doucet.
\newblock Network consensus in the {Wasserstein} metric space of probability
  measures.
\newblock {\em SIAM Journal on Control and Optimization}, 59(5):3261--3277,
  2021.
\newblock \href {http://dx.doi.org/10.1137/19M1268252}
  {\path{doi:10.1137/19M1268252}}.

\bibitem{NB-GP-MC:16}
N.~Bonneel, G.~Peyr\'{e}, and M.~Cuturi.
\newblock {Wasserstein} barycentric coordinates: {Histogram} regression using
  optimal transport.
\newblock {\em ACM Transactions on Graphics}, 35(4), 2016.
\newblock \href {http://dx.doi.org/10.1145/2897824.2925918}
  {\path{doi:10.1145/2897824.2925918}}.

\bibitem{NB-JR-GP-HP:15}
N.~Bonneel, J.~Rabin, G.~Peyr\'{e}, and H.~Pfister.
\newblock Sliced and radon {Wasserstein} barycenters of measures.
\newblock {\em Journal of Mathematical Imaging and Vision}, 51(1):22--45, 2015.
\newblock \href {http://dx.doi.org/10.1007/s10851-014-0506-3}
  {\path{doi:10.1007/s10851-014-0506-3}}.

\bibitem{FB:22}
F.~Bullo.
\newblock {\em Lectures on Network Systems}.
\newblock Kindle Direct Publishing, {1.6} edition, January 2022.
\newblock URL: \url{http://motion.me.ucsb.edu/book-lns}.

\bibitem{GB-LDP-PGG:12}
G.~Buttazzo, L.~{De~Pascale}, and P.~Gori-Giorgi.
\newblock Optimal-transport formulation of electronic density-functional
  theory.
\newblock {\em Physical Review A}, 85, 2012.
\newblock \href {http://dx.doi.org/10.1103/PhysRevA.85.062502}
  {\path{doi:10.1103/PhysRevA.85.062502}}.

\bibitem{Gc-IE:10}
G.~Carlier and I.~Ekeland.
\newblock Matching for teams.
\newblock {\em Economic Theory}, 42:397--418, 2010.
\newblock \href {http://dx.doi.org/10.1007/s00199-008-0415-z}
  {\path{doi:10.1007/s00199-008-0415-z}}.

\bibitem{GC-AO-EO:15}
G.~Carlier, A.~Oberman, and E.~Oudet.
\newblock Numerical methods for matching for teams and {Wasserstein}
  barycenters.
\newblock {\em ESAIM: Mathematical Modelling and Numerical Analysis},
  49:1621--1642, 2015.
\newblock \href {http://dx.doi.org/10.1051/m2an/2015033}
  {\path{doi:10.1051/m2an/2015033}}.

\bibitem{YC-TTG-AT:19}
Y.~{Chen}, T.~T. {Georgiou}, and A.~{Tannenbaum}.
\newblock Optimal transport for {Gaussian} mixture models.
\newblock {\em IEEE Access}, 7:6269--6278, 2019.
\newblock \href {http://dx.doi.org/10.1109/ACCESS.2018.2889838}
  {\path{doi:10.1109/ACCESS.2018.2889838}}.

\bibitem{SC-EC-JS:18}
S.~Claici, E.~Chien, and J.~Solomon.
\newblock Stochastic {W}asserstein barycenters.
\newblock In {\em International Conference on Machine Learning}, volume~80,
  pages 999--1008, 2018.

\bibitem{NC-RF-DT-AR:17}
N.~{Courty}, R.~{Flamary}, D.~{Tuia}, and A.~{Rakotomamonjy}.
\newblock Optimal transport for domain adaptation.
\newblock {\em IEEE Transactions on Pattern Analysis and Machine Intelligence},
  39(9):1853--1865, 2016.
\newblock \href {http://dx.doi.org/10.1109/TPAMI.2016.2615921}
  {\path{doi:10.1109/TPAMI.2016.2615921}}.

\bibitem{MC-AD:14}
M.~Cuturi and A.~Doucet.
\newblock Fast computation of {Wasserstein} barycenters.
\newblock In {\em International Conference on Machine Learning}, volume~32,
  pages 685--693, 2014.

\bibitem{MC-GP:16}
M.~Cuturi and G.~Peyré.
\newblock A smoothed dual approach for variational {Wasserstein} problems.
\newblock {\em SIAM Journal on Imaging Sciences}, 9:320--343, 2016.
\newblock \href {http://dx.doi.org/doi.org/10.1137/15M1032600}
  {\path{doi:doi.org/10.1137/15M1032600}}.

\bibitem{GD-DN-FA-GW:00}
G.~Deffuant, D.~Neau, F.~Amblard, and G.~Weisbuch.
\newblock Mixing beliefs among interacting agents.
\newblock {\em Advances in Complex Systems}, 3(1/4):87--98, 2000.
\newblock \href {http://dx.doi.org/10.1142/S0219525900000078}
  {\path{doi:10.1142/S0219525900000078}}.

\bibitem{PD-DD-AG-CU-AN:18}
P.~Dvurechenskii, D.~Dvinskikh, A.~Gasnikov, C.~Uribe, and A.~Nedich.
\newblock Decentralize and randomize: Faster algorithm for {Wasserstein}
  barycenters.
\newblock 2018.

\bibitem{SF-SET:17}
S.~Fiske and S.~E. Taylor.
\newblock {\em Social Cognition: From Brains to Culture}.
\newblock Sage, 3 edition, 2003.

\bibitem{SG-JML-EM:13}
S.~Gallón, J.-M. Loubes, and E.~Maza.
\newblock Statistical properties of the quantile normalization method for
  density curve alignment.
\newblock {\em Mathematical Biosciences}, 242(2):129--142, 2013.
\newblock \href {http://dx.doi.org/10.1016/j.mbs.2012.12.007}
  {\path{doi:10.1016/j.mbs.2012.12.007}}.

\bibitem{TLG-JML:17}
T.~{Le~Gouic} and J.-M. Loubes.
\newblock Existence and consistency of {Wasserstein} barycenters.
\newblock {\em Probability Theory and Related Fields}, 168:901--917, 2017.
\newblock \href {http://dx.doi.org/10.1007/s00440-016-0727-z}
  {\path{doi:10.1007/s00440-016-0727-z}}.

\bibitem{IM-JSB:15}
I.~Matei and J.~S. Baras.
\newblock The asymptotic consensus problem on convex metric spaces.
\newblock {\em IEEE Transactions on Automatic Control}, 60(4):907--921, 2015.
\newblock \href {http://dx.doi.org/10.1109/TAC.2014.2362988}
  {\path{doi:10.1109/TAC.2014.2362988}}.

\bibitem{YM:20}
Y.~Mroueh.
\newblock {Wasserstein} style transfer.
\newblock In {\em International Conference on Artificial Intelligence and
  Statistics}, volume 108, pages 842--852, 2020.

\bibitem{PJO-JCT:90}
P.~J. Oakes and J.~C. Turner.
\newblock Is limited information processing capacity the cause of social
  stereotyping?
\newblock {\em European Review of Social Psychology}, 1(1):111--135, 1990.
\newblock \href {http://dx.doi.org/10.1080/14792779108401859}
  {\path{doi:10.1080/14792779108401859}}.

\bibitem{VMP-YZ:20}
V.~M. Panaretos and Y.~Zemel.
\newblock {\em An Invitation to Statistics in {Wasserstein} Space}.
\newblock SpringerBriefs in Probability and Mathematical Statistics. Springer,
  2020.
\newblock \href {http://dx.doi.org/10.1007/978-3-030-38438-8}
  {\path{doi:10.1007/978-3-030-38438-8}}.

\bibitem{JR-GP-JD-MB:12}
J.~Rabin, G.~Peyr{\'e}, J.~Delon, and M.~Bernot.
\newblock {Wasserstein} barycenter and its application to texture mixing.
\newblock In {\em Scale Space and Variational Methods in Computer Vision},
  pages 435--446. Springer, 2012.
\newblock \href {http://dx.doi.org/10.1007/978-3-642-24785-9_37}
  {\path{doi:10.1007/978-3-642-24785-9_37}}.

\bibitem{FS:15}
F.~Santambrogio.
\newblock {\em Optimal Transport for Applied Mathematicians: {Calculus} of
  Variations, {PDEs}, and Modeling}.
\newblock Progress in Nonlinear Differential Equations and Their Applications.
  Birkh{\"a}user, 2015.
\newblock \href {http://dx.doi.org/10.1007/978-3-319-20828-2}
  {\path{doi:10.1007/978-3-319-20828-2}}.

\bibitem{MAS-MH-NB-FN-DC-MC-GP-JLS:18}
M.~A. Schmitz, M.~Heitz, N.~Bonneel, F.~Ngol\'{e}, D.~Coeurjolly, M.~Cuturi,
  G.~Peyr\'{e}, and J.-L. Starck.
\newblock {Wasserstein} dictionary learning: Optimal transport-based
  unsupervised nonlinear dictionary learning.
\newblock {\em SIAM Journal on Imaging Sciences}, 11:643--678, 2018.
\newblock \href {http://dx.doi.org/10.1137/17M1140431}
  {\path{doi:10.1137/17M1140431}}.

\bibitem{VS-MC:15}
V.~Seguy and M.~Cuturi.
\newblock Principal geodesic analysis for probability measures under the
  optimal transport metric.
\newblock 2015.

\bibitem{RS:11b}
R.~Sepulchre.
\newblock Consensus on nonlinear spaces.
\newblock {\em Annual Reviews in Control}, 35(1):56--64, 2011.

\bibitem{SS-VC-QD-DD:15}
S.~Srivastava, V.~Cevher, Q.~Dinh, and D.~Dunson.
\newblock {WASP:} scalable {Bayes} via barycenters of subset posteriors.
\newblock In {\em International Conference on Artificial Intelligence and
  Statistics}, volume~38, pages 912--920, 2015.

\bibitem{ATS-AJ:10}
A.~Tahbaz-Salehi and A.~Jadbabaie.
\newblock Consensus over ergodic stationary graph processes.
\newblock {\em IEEE Transactions on Automatic Control}, 55(1):225--230, 2010.
\newblock \href {http://dx.doi.org/10.1109/TAC.2009.2034054}
  {\path{doi:10.1109/TAC.2009.2034054}}.

\bibitem{CAU-DD-PD-AG-AN:18}
C.~A. {Uribe}, D.~{Dvinskikh}, P.~{Dvurechensky}, A.~{Gasnikov}, and
  A.~{Nedić}.
\newblock Distributed computation of {Wasserstein} barycenters over networks.
\newblock In {\em {IEEE} Conf.\ on Decision and Control}, pages 6544--6549,
  2018.
\newblock \href {http://dx.doi.org/10.1109/CDC.2018.8619160}
  {\path{doi:10.1109/CDC.2018.8619160}}.

\bibitem{CV:03}
C.~Villani.
\newblock {\em Topics in Optimal Transportation}.
\newblock Graduate Studies in Mathematics. American Mathematical Society, 2003.

\bibitem{CV:09}
C.~Villani.
\newblock {\em Optimal Transport: Old and New}.
\newblock Springer, 2009.
\newblock \href {http://dx.doi.org/10.1007/978-3-540-71050-9}
  {\path{doi:10.1007/978-3-540-71050-9}}.

\end{thebibliography}

\end{document}